\documentclass{article}
\usepackage[margin=1in]{geometry}
\usepackage[onehalfspacing]{setspace}
\usepackage{amsmath,mathtools}
\usepackage{pgfplots}
\usepackage{subcaption}
\usepackage{booktabs}
\usepackage[braket, qm]{qcircuit}
\usepackage{graphicx}
\usepackage{authblk}
\usepackage{xcolor}
\usepackage[linesnumbered,ruled,vlined]{algorithm2e}

\SetCommentSty{mycommfont}

\SetKwInput{KwInput}{Input}                
\SetKwInput{KwOutput}{Output}              
\SetKwProg{Fn}{Function}{}{}

\newcommand{\myheight}{1.79 cm}
\newcommand{\mywidth}{0.225\textwidth}
\newcommand{\QWHT}{\mathcal{H}_Q}

\usepackage[english]{babel}

\usepackage{pdflscape}
\usepackage{fancyvrb}
\usepackage{calc}
\usepackage{rotating}
\usepackage{pgf,tikz}
\usepackage{mathrsfs}
\usetikzlibrary{arrows}
\usepackage{mathtools}

\usepackage{listings}
\usepackage[]{qcircuit}
\usepackage{braket}
\usepackage{mathtools}
\usepackage{varwidth}
\usepackage{caption}
\usepackage{subcaption}
\usepackage{graphicx}
\usepackage[export]{adjustbox}

\newcommand{\norm}[1]{\left\lVert#1\right\rVert}

\makeatletter
\def\paragraph{\@startsection{paragraph}{4}%
	\z@\z@{-\fontdimen2\font}%
	{\normalfont\bfseries}}
\makeatother

\usepackage{graphicx}
\setcounter{secnumdepth}{3}
\setcounter{tocdepth}{3}
\usepackage{pgffor}
\usepackage{tikz,tikz-cd}
\usetikzlibrary{backgrounds,fit,decorations.pathreplacing}  
\usepackage{booktabs}
\usepackage{enumerate}

\usepackage{amssymb, amsmath, amsthm}

\usepackage{xypic}
\usepackage{scalerel}
\usepackage{ulem}

\usepackage{graphicx}              
\usepackage{amsmath}               
\usepackage{amsfonts}              
\usepackage{amsthm}                
\usepackage{xkeyval}               
\usepackage{amssymb}
\usepackage{enumerate}             
\usepackage{color}
\usepackage{breqn}                 
\usepackage{bm}                    

\overfullrule = 1cm

\allowdisplaybreaks[1]
\usepackage{nicefrac}
\usepackage{booktabs}

\usepackage{kpfonts}
\usepackage{stackengine}
\usepackage{calc}
\newlength\shlength
\newcommand\xshlongvec[2][0]{\setlength\shlength{#1pt}%
	\stackengine{-5.6pt}{$#2$}{\smash{$\kern\shlength%
			\stackengine{7.55pt}{$\mathchar"017E$}%
			{\rule{\widthof{$#2$}}{.57pt}\kern.4pt}{O}{r}{F}{F}{L}\kern-\shlength$}}%
	{O}{c}{F}{T}{S}}

\usepackage{hyperref}
\hypersetup{
	colorlinks   = true,    
	citecolor    = red      
}


\newcommand{\RN}[1]{%
	\textup{\uppercase\expandafter{\romannumeral#1}}%
}



\allowdisplaybreaks


\newcommand{\meqref}[1]{\text{Eq}.~\eqref{#1}}


\newtheorem{thm}{Theorem}[subsection]

\newtheorem{lem}[thm]{Lemma}

\newtheorem{defn}[thm]{Definition} 
\newtheorem{example}[thm]{Example} 
\newtheorem{remark}[thm]{Remark}

\newcommand{\RR}{\mathbb{R}}      
\newcommand{\ZZ}{\mathbb{Z}}      

\newcommand{\mat}[4]{\left[\begin{smallmatrix*}[r]
		#1 & #2 \\
		#3 & #4 \\
	\end{smallmatrix*}\right]}

\arraycolsep=3pt\def\arraystretch{0.5}


 

\def\CC{{\mathbb C}}

\def\RR{{\mathbb R}}

\def\ZZ{{\mathbb Z}}





\def\<{\langle}
\def\>{\rangle}


\def\longto{\longrightarrow}






\DeclareMathOperator{\GL}{GL}

\newcommand{\iprod}[2]{\langle #1, \, #2 \rangle}


\numberwithin{equation}{section}



\begin{document}

	\title{A hybrid classical-quantum algorithm for solution of nonlinear ordinary differential equations
	}

	\author[1]{Alok Shukla \thanks{alok.shukla@ahduni.edu.in}}
	\author[2]{Prakash Vedula \thanks{pvedula@ou.edu}}
	\affil[1]{Ahmedabad University, India}
	\affil[2]{University of Oklahoma, USA}

	\maketitle

	\begin{abstract}
		A hybrid classical-quantum approach for the solution of nonlinear ordinary differential equations using Walsh-Hadamard basis functions is proposed.
		Central to this hybrid approach is the computation of the Walsh-Hadamard transform of arbitrary vectors, which is enabled in our framework using quantum Hadamard gates along with state preparation, shifting, scaling, and measurement operations.
		It is estimated that the proposed hybrid classical-quantum approach for the Walsh-Hadamard transform of an input vector of size $ N $ results in a considerably lower computational complexity ($\mathcal{O}(N)$ operations) compared to the Fast Walsh-Hadamard transform ($\mathcal{O}(N \log_2 (N)) $ operations). 	
		This benefit will also be relevant in the context of the proposed hybrid classical-quantum approach for the solution of nonlinear differential equations.
		Comparisons of results corresponding to the proposed hybrid classical-quantum approach and a purely classical approach for the solution of nonlinear differential equations (for cases involving one and two dependent variables) were found to be satisfactory.
		Some new perspectives relevant to the natural ordering of Walsh functions (in the context of both classical and hybrid approaches for the solution of nonlinear differential equations) and the representation theory of finite groups are also presented here.
	\end{abstract}


	\section{Introduction}\label{sec:intro}
	
	Ordinary differential equations (ODEs) are often used for mathematical descriptions of models in many engineering and science (including those relevant to physical, biological, and social sciences) disciplines. As exact solutions to nonlinear ODEs are generally not available, various numerical methods (e.g. single-step, multi-step, finite difference, finite element, and spectral methods) are often used to obtain approximate solutions~\cite{butcher2016numerical,suli2010numerical}. The selection of appropriate numerical methods for the solution of nonlinear differential equations is guided by considerations of accuracy, numerical stability, stiffness, memory constraints and execution time.  
	
	Spectral or pseudospectral methods are often preferred for high-order accuracy~\cite{trefethen2000spectral,boyd2001chebyshev}. 
	In these methods, the solution is expressed as a linear combination of certain (global) basis functions and the coefficients of these basis functions are optimally selected based on the underlying differential equation.
	These methods can be shown to have exponential convergence. They are often more accurate compared to finite difference or finite element methods.
	
	While many different choices for orthogonal basis functions can be considered, we consider Walsh functions~\cite{walsh1923closed, beauchamp1975walsh} as basis functions for the solution of ordinary differential equations.
	Walsh functions and associated transforms have been found to be useful in many contexts relevant to signal processing~\cite{zarowski1985spectral}, image compression~\cite{kuklinski1983fast}, cryptography~\cite{lu2016walsh}, solution of non-linear differential equations~\cite{beer1981walsh,ahner1988walsh}, solution of variational problems \cite{chen1975walsh} and solution of partial differential equations relevant to fluid dynamics (with discontinuities or shocks)~\cite{gnoffo2014global, gnoffo2015unsteady, gnoffo2017solutions}.
	Our choice of basis functions is motivated not only due to important properties of Walsh functions but also due to the close natural connection between the Walsh-Hadamard transform~\cite{beauchamp1975walsh} and the Hadamard gate~\cite{nielsen2002quantum} that is widely used in many quantum circuits and algorithms. Although the Hadamard gate in a quantum circuit performs a Walsh-Hadamard transform of a quantum state (vector), it is important to note that extracting useful classical information is challenging. In this paper we address the underlying challenges associated with extracting useful classical information related to the Walsh-Hadamard transform of an arbitrary vector for subsequent use in solution of nonlinear ordinary differential equations.
	
	Walsh functions, which are square-wave like functions and have values of $\pm 1$ at any point in the domain of interest, form a complete set of orthogonal basis functions.
	Unlike many other orthogonal basis functions, Walsh functions are closed under multiplication (e.g. product of two Walsh functions results in a function that is contained in the basis set). 
	As Walsh basis functions are composed of square-wave like functions, they are well suited to represent discontinuous functions and/or rapidly changing functions.
	Coefficients of the Walsh basis functions can be obtained from a data vector in the physical domain using a discrete Walsh-Hadamard transform (matrix). It can be shown~\cite{kunz1979equivalence} that the discrete Walsh-Hadamard transform when applied to a $2^n$ one-dimensional data is equivalent to the discrete $n-$dimensional Fourier transform applied the same data arranged on a binary $n-$cube.
	%
	In contrast to many other transforms, the discrete Walsh-Hadamard transform can be computed by addition and subtraction operations alone and without the need for multiplication. This feature enables fast and accurate computation of the discrete Walsh-Hadamard transform. Note that the computational complexities associated with basic implementations of addition and multiplication operations of two $p-$digit numbers can be estimated to be $\mathcal{O}(p)$ and $\mathcal{O}(p^2)$ respectively.
	While a basic implementation of the discrete Walsh-Hadamard transform has a computational complexity of $\mathcal{O}(N^2)$ (where $N$ denotes the elements in the data vector), efficient algorithms similar to those used in the Fast Fourier Transform (FFT) exist for computation of Fast Walsh-Hadamard transform that has a complexity of $\mathcal{O}(N ~\log N)$.
	
	Walsh functions can be ordered in many ways and commonly used orderings include the sequency ordering (according to the number of zero-crossings), the natural ordering (based on Hadamard matrices generated by Kronecker products) and the dyadic ordering~\cite{geadah1977natural}. While the matrix rows for the transform matrices corresponding to these orderings can be interchanged via the use of permutation matrices, more efficient approaches exist for such purposes. For instance, conversion from the natural to the sequency ordering can be accomplished via a combination of bit reversal and Gray code to binary conversion.
	
	In the literature, many treatments of Walsh basis functions applied to the solution of differential equations have focused on the sequency ordering. In contrast, in this work, we consider the natural ordering of Walsh functions and present results (including integration matrices and associated inverses) based on this ordering.
	For convenience, we will refer to Walsh basis functions considered in the natural ordering (based on Hadamard matrices) as Walsh-Hadamard (WH) basis functions.
	
	Our preference for this ordering is based on the insight that the transform matrix for this ordering corresponds to the Hadamard matrix describing the quantum Hadamard gate commonly used in many quantum circuits and algorithms.
	%
	This insight provides further motivation to explore the possibility of making use of quantum Hadamard gates for fast computation of the Walsh-Hadamard transform which is a key step in the solution of differential equations based on Walsh functions.
	%
	Quantum algorithms were previously proposed for the solution of linear ordinary differential equations~\cite{childs2020quantum,berry2017quantum, berry2014high} and nonlinear ordinary differential equations~\cite{leyton2008quantum,lloyd2020quantum,liu2021efficient}.
	%
	It is worth noting that development of quantum algorithms in the latter category is particularly challenging owing to the linearity of quantum mechanics.
	%
	Limitations of previous quantum algorithms for the solution of nonlinear differential equations can be associated with poor scaling (as resources needed increase exponentially~\cite{leyton2008quantum} or quadratically~\cite{lloyd2020quantum} with integration time), restricted sparsity structure~\cite{lloyd2020quantum}, dissipation, restricted nonlinearities or approximation errors associated with Carleman linearization~\cite{liu2021efficient}, to name a few.
	In this paper, we propose a hybrid classical-quantum approach for the solution of nonlinear ordinary differential equations (ODEs) using Walsh-Hadamard basis functions. 
	We present new perspectives for the solution of nonlinear ODEs for both classical and hybrid approaches based on the natural ordering of Walsh functions. 
	In this context, we also explore some theoretical insights on Walsh-Hadamard transform arising from the character theory of finite groups.
	While Hadamard gates are commonly used in many quantum circuits and algorithms and can be used to naturally obtain the Hadamard transform (under certain conditions), there are challenges involved in extracting useful classical information particularly due to the ambiguity associated with the global phase while carrying out measurements. We address these challenges via a novel hybrid classical-quantum approach for computation of Walsh-Hadamard transform for arbitrary input vectors. This approach makes use of the special structure of the Walsh-Hadamard transform and it involves a combination of shifting, scaling and measurement operations, along with state preparation and use of quantum Hadamard gates. The advantage of our hybrid approach is a considerably lower computational complexity ($\mathcal{O}(N)$) in comparison to classical Fast Walsh-Hadamard transform ($\mathcal{O}(N \log_2 N)$). This speedup can also be harnessed for the solution of nonlinear ODEs. The proposed hybrid classical-quantum approaches for computation of Walsh-Hadamard transform (using quantum Hadamard gates) and solutions of nonlinear ordinary differential equations were successfully implemented and tested using the simulated environment of Qiskit (IBM's open source quantum computing platform).  

	While we demonstrate our proposed approach on solutions of nonlinear Initial Value Problems (IVPs), the proposed approach could also be easily extended to solution of nonlinear Boundary Value Problems (BVPs) as well, via the use of a shooting method (\cite{ascher1995numerical}, \cite{bulirsch2002introduction}, \cite{suli2003introduction}, \cite{burden2015numerical},  and \cite{isaacson2012analysis}).

	The rest of this paper is organized as follows. 
	In section~\ref{sec:wh_functions}, we present the definition, construction and properties of Walsh-Hadamard basis functions. The relation between Walsh functions and the character theory of finite groups is presented in section~\ref{Section_Walsh_character_theory}. Classical and quantum implementation of the Walsh-Hadamard transform are discussed in sections~\ref{subsec:wh_transform} and \ref{subsec:quantum_wh_transform}. Our proposal for a hybrid classical-quantum approach for Walsh-Hadamard transform is presented in section~\ref{sec:hybrid-cq-wht} and the associated computational complexity is discussed in section~\ref{sec:comp_complexity}. This approach is central to the solution of nonlinear ordinary differential equations using Walsh-Hadamard basis functions. Some basic concepts relevant to function representation, integration and differentiation using Walsh-Hadamard basis functions are presented in sections~\ref{sec:representation_wh_basis}, \ref{subSec_Integraiton_matrix} and \ref{subsec:differentiation_matrix}. The proposed hybrid quantum-classical approach for the solution of (non-stiff and stiff) nonlinear ordinary differential equations is presented in section~\ref{sec:hybrid_algorithm_de}. Three computational examples based on this approach are presented in section~\ref{sec:computational_examples} and conclusions are summarized in section~\ref{sec:conclusion}.

	\section{Walsh Functions}\label{sec:wh_functions}
	Let $ N=2^n $ be a positive integer. For $ j =0,~1,~2, ~\ldots~ N-1 $, the Walsh functions are defined as follows
	
	\begin{align}
		W_0(x) &= 1 \quad \text{for } 0 \leq x \leq 1,  \\
		W_{2j} (x) &= W_j(2x) + (-1)^j W_j (2x -1 ),  \\
		W_{2j+1} (x) &= W_j(2x) - (-1)^j W_j (2x -1 ), \\
		W_j(x) &= 0 \quad \text{for } x < 0 \text{ and } x >1.
	\end{align}
	For $ N =8 $ the Walsh functions are shown in Figure $\ref{fig_walsh_sequency}$. It is clear that the Walsh functions take the values $ 1 $ or $ -1 $ on the interval $ [0,1] $ and they are $ 0 $ everywhere else. The Walsh functions shown in Figure $\ref{fig_walsh_sequency}$ are in the so-called \textit{sequency order}. In sequency ordering the number of sign changes (or zero-crossings) for the Walsh functions increase as the orders of the functions increase. The Walsh functions with an even number of sign changes are symmetric about the line $ x = 1/2 $ and the Walsh functions with odd numbers of sign changes are anti-symmetric about the line $ x = 1/2 $. In the literature, analogous to the trigonometric sine and cosine functions the $ sal_j(x) $ and $ cal_j(x) $ functions are defined as given below.
	\begin{align}
		sal_j(x) &= W_{2j-1} (x), \\
		cal_j(x) & = W_{2j} (x).
	\end{align} 
	It can be shown that Walsh functions are orthonormal.
	\begin{equation}\label{eq_orthonormal}
		\int_{- \infty}^{\infty} \,  W_m(x) \, W_n(x) \, dx = \begin{cases}
			&N, \quad \text{if } m =n, \\
			&0, \quad \text{if } m \neq n. 
		\end{cases}
	\end{equation}
	{
		\begin{figure}[htp!]
			\begin{center}
				\begin{subfigure}[t]{\mywidth}
					\centering
					\includegraphics[width=\linewidth,height=\myheight]{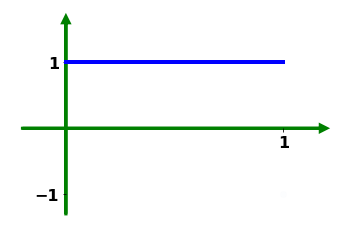}
					\caption{$ W_0(x) $}
				\end{subfigure}
				\hspace{1cm}
				\begin{subfigure}[t]{\mywidth}
					\centering
					\includegraphics[width=\linewidth,height=\myheight]{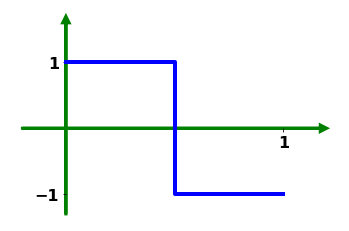}
					\caption{$ W_1(x) $}
				\end{subfigure}
				
				\medskip
				
				\begin{subfigure}[t]{\mywidth}
					\centering
					\includegraphics[width=\linewidth,height=\myheight]{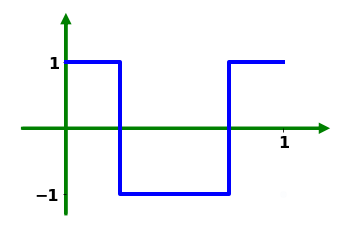}
					\caption{$ W_2(x) $}
				\end{subfigure}
				\hspace{1cm}
				\begin{subfigure}[t]{\mywidth}
					\centering
					\includegraphics[width=\linewidth,height=\myheight]{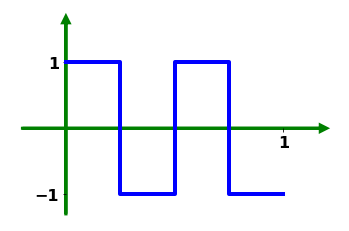}
					\caption{$ W_3(x) $}
				\end{subfigure}

				\medskip
				
				\begin{subfigure}[t]{\mywidth}
					\centering
					\includegraphics[width=\linewidth,height=\myheight]{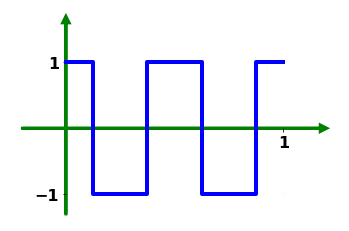}
					\caption{$ W_4(x) $}
				\end{subfigure}
				\hspace{1cm}
				\begin{subfigure}[t]{\mywidth}
					\centering
					\includegraphics[width=\linewidth,height=\myheight]{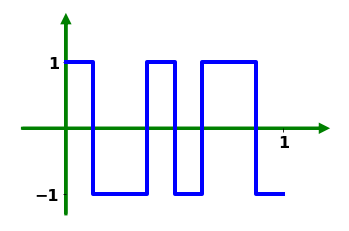}
					\caption{$ W_5(x) $}
				\end{subfigure}

				\medskip
				
				\begin{subfigure}[t]{\mywidth}
					\centering
					\includegraphics[width=\linewidth,height=\myheight]{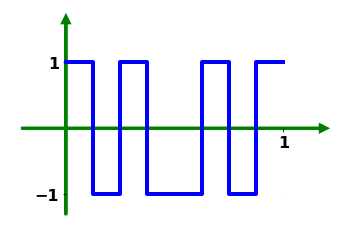}
					\caption{$ W_6(x) $}
				\end{subfigure}
				\hspace{1cm}
				\begin{subfigure}[t]{\mywidth}
					\centering
					\includegraphics[width=\linewidth,height=\myheight]{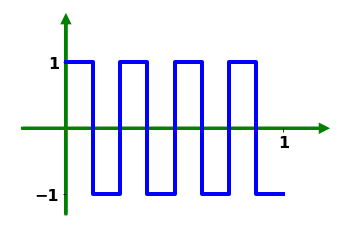}
					\caption{$ W_7(x) $}
				\end{subfigure} 
				
			\end{center} 
			
			\caption{Walsh functions in the sequency ordering for $ N=8 $.}\label{fig_walsh_sequency}
		\end{figure}
		
		We refer the readers to~\cite{beauchamp1975walsh} for alternate definitions and properties of Walsh functions. Walsh functions can also be interpreted using the character theory of finite groups. Perhaps, this provides a conceptually cleaner approach to defining an equivalent collection of Walsh functions up to ordering. We will define Walsh functions using this approach in Section~\ref{Section_Walsh_character_theory}.
		In the rest of the paper, we will not use the sequency ordering of Walsh functions, instead, we will use the natural Hadamard ordering as used in the quantum computing community. Of course, it is easy to go from one ordering to another in applications.
		
		\section{Walsh Functions and the character theory of finite groups} \label{Section_Walsh_character_theory}
		We will first discuss some well-known results related to the Fourier analysis on finite abelian groups. We will work with a general finite abelian group in the beginning and later specialize to $ (\ZZ/2\ZZ)^n $, leading to the theory of Walsh-Hadamard Transform. All the results in this section can be found in any standard textbook covering the representation theory of finite groups (e.g.,~\cite{serre1977linear},~\cite{fulton2013representation},~\cite{steinberg2012representation},~\cite{terras1999fourier}). While the character theory of finite groups is well-known its connections with Walsh-Hadamard transform are not adequately explored in the literature (to the best of our knowledge). Here we will explore these connections further and we will show that Walsh functions can be interpreted in terms of characters of certain abelian groups.
		
		\subsection{Fourier Analysis on Finite Groups}
		Let $ G $ be a finite group. We denote the vector space of all complex valued functions on $ G $ by $ L(G) $, i.e.,
		\begin{equation}\label{eq_def_LG}
			L(G) = \{ \, f \ | \ f: G \longto \CC\}.
		\end{equation}
		The addition and scalar multiplication in $ L(G) $ are given by
		\begin{equation}\label{eq_def_vector_space_LG}
			(f_1 + f_2) (g) = f_1(g) + f_2(g) \quad \text{ and } \quad (cf)(g) = c f(g), \quad \text{for $ f_1, f_2 \in L(G) $, and $ c \in \CC $. }
		\end{equation} 
		The space $ L(G) $ is an inner product space with the inner product 
		\begin{equation}\label{eq_def_inner_product}
			\iprod{f_1}{f_2} = \frac{1}{|G|} \, \sum_{g \in G} \, f_1(g) \overline{f_2 (g)}.
		\end{equation}
		Next, we recall some basic facts related to the representation theory of finite groups. Let $ \phi : G \longto \GL(V) $  be a representation of $ G $, where $ V $ is a finite-dimensional vector space. It means each element $ g \in G $ gives rise to a linear transformation $ \phi(g) $ of the vector space $ V $. The representation  $ \phi $ of the group $ G $ can be thought of as the action of $ G $  on the vector space $ V $, such that $ \phi(g h ) \cdot  v = \phi (g) \cdot (\phi(h) \cdot v)$ for all $ g,h \in G $ and $ v \in V $, and $ e \cdot v =v  $ for the identity $ e $ of $ G $.  The dimension of $ V $ is called the dimension of $ \phi $. A subspace $ U $ of $ V $ is called $ G $-invariant if $ \phi(g) \cdot u \in U $ for all $ u \in U $ and $ g \in G $. A non-zero representation $ \phi  $ is said to be \textit{irreducible} if $ \{0\} $ and $ V $ are the only $ G $-invariant subspaces of $ V $. Two $ n $-dimensional representations $ \phi  $ and $ \psi $ are called \textit{equivalent representations} if there exists an invertible matrix $ T \in \GL_n(\CC) $ such that $ \phi(g) = T \, \psi(g) \, T^{-1} $ for every $ g \in G$. If $ \phi $ and $ \psi $ are equivalent representations then we write $ \phi \sim \psi $. A finite-dimensional representation $ \phi $ is called unitary if
		$ \iprod{\phi(g) \cdot u}{\phi(g) \cdot v} = \iprod{u}{v} $ for all $ u,v \in V $ and $ g \in G $. It is a fact that every representation of a finite group $ G $ is equivalent to a unitary representation. It is easy to see that a one dimensional unitary representation is a homomorphism $ \phi : G \longto S^1 $, where $ S^1 = \{ z \in \CC \ | \ |z| =1\} $.
		
		The \textit{character} of the representation $ \phi $ is a function $ \chi : G \longto \CC $  defined as  $ \chi_{\phi} (g)  = Tr(\phi (g)) $, where $ Tr $ denotes the trace of a matrix. It follows from the definition of the character $ \chi_{\phi}  $ that $ \chi_{\phi}(g x g^{-1}) = \chi_{\phi}(x) $ for all $ g,x \in G $, and it means that $ \chi_{\phi}  $ is a \textit{class function}, i.e., it is constant on conjugacy classes of $ G $. A character of an irreducible representation is called an irreducible character. 
		
		The irreducible characters of a group, as their name suggests, characterize inequivalent irreducible representations of a group. The following orthogonality result is of fundamental importance in the representation theory of finite groups.
		
		\begin{thm}\label{thm_orthogonality_one}
			Let $ \phi $ and $ \psi $ be irreducible representations of $ G $. Then
			\begin{equation}\label{eq_orthogonality}
				\iprod{\chi_{\phi}}{\chi_{\psi}}  = \begin{cases}
					1, &\quad \text{if } \phi \sim \psi, \\
					0, &\quad \text{otherwise}.
				\end{cases} 
			\end{equation}
			It follows that the irreducible characters of $ G $ form an orthonormal set of class functions. The number of inequivalent irreducible representations is the same as the number of inequivalent irreducible characters of the group $ G $, which in turn is given by the number of distinct conjugacy classes of $ G $.
		\end{thm}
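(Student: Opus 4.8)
The plan is to deduce the character relation from Schur's lemma via the standard averaging construction. First I would recall Schur's lemma in its two forms: if $\phi$ and $\psi$ are irreducible and $T$ is a linear map with $T\phi(g) = \psi(g)T$ for all $g$, then $T = 0$ when $\phi \not\sim \psi$, and $T$ is a scalar multiple of the identity when $\phi = \psi$ (over $\CC$, using that $T$ has an eigenvalue). Since by the remark preceding the theorem every representation of a finite group is equivalent to a unitary one, and both sides of \eqref{eq_orthogonality} depend only on equivalence classes, I may assume $\phi$ and $\psi$ are unitary without loss of generality.

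Next, given any linear map $A : V_\psi \to V_\phi$, I would form the average
\begin{equation*}
	T_A = \frac{1}{|G|} \sum_{g \in G} \phi(g) \, A \, \psi(g)^{-1},
\end{equation*}
and check by reindexing the sum that $\phi(h) T_A = T_A \psi(h)$ for every $h \in G$, so $T_A$ is an intertwiner. Applying Schur's lemma: if $\phi \not\sim \psi$ then $T_A = 0$ for all $A$; if $\phi = \psi$ then $T_A = \lambda(A)\,\id$, and taking traces gives $\lambda(A) = \tr(A)/\dim V$ since $\tr(\phi(g)A\phi(g)^{-1}) = \tr(A)$. Now I would extract the matrix-coefficient orthogonality relations by choosing $A$ to be the elementary matrices $E_{k\ell}$ (with respect to fixed orthonormal bases), which yields, writing $\phi(g) = (\phi_{ij}(g))$ and $\psi(g) = (\psi_{ij}(g))$ and using unitarity $\psi(g)^{-1}_{ij} = \overline{\psi_{ji}(g)}$,
\begin{equation*}
	\frac{1}{|G|} \sum_{g \in G} \phi_{ik}(g)\,\overline{\psi_{j\ell}(g)} =
	\begin{cases}
		\frac{1}{\dim V}\,\delta_{ij}\delta_{k\ell}, & \phi = \psi,\\
		0, & \phi \not\sim \psi.
	\end{cases}
\end{equation*}
Finally, I would recover the character statement by setting $i = k$ and $j = \ell$ and summing: $\iprod{\chi_\phi}{\chi_\psi} = \frac{1}{|G|}\sum_g \sum_i \phi_{ii}(g) \sum_j \overline{\psi_{jj}(g)}$, which collapses to $0$ in the inequivalent case and to $\sum_{i,j} \frac{1}{\dim V}\delta_{ij}\delta_{ij} = \frac{\dim V}{\dim V} = 1$ in the case $\phi = \psi$. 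Since equivalent representations have equal characters, this gives the value $1$ whenever $\phi \sim \psi$, completing the dichotomy.

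For the concluding assertions of the theorem, orthonormality of the irreducible characters is exactly \eqref{eq_orthogonality} together with the fact (noted above) that characters are class functions; linear independence is immediate from orthonormality, so the number of inequivalent irreducibles is at most the dimension of the space of class functions, which equals the number of conjugacy classes. The reverse inequality — that the irreducible characters actually span the class functions — is the part I expect to be the main obstacle, since unlike the rest it is not a formal consequence of Schur's lemma: the cleanest route is to show that any class function orthogonal to every irreducible character acts as zero in every irreducible representation (again by an averaging-plus-Schur argument applied to $\sum_g \overline{f(g)}\,\phi(g)$), hence is zero in the regular representation, hence is the zero function. I would either carry out this short argument or, since the excerpt says these facts "can be found in any standard textbook," cite \cite{serre1977linear} or \cite{steinberg2012representation} for it and present the orthogonality computation above in full.
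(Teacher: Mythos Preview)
Your proposal is correct and follows the standard Schur-lemma-plus-averaging route found in the textbooks the paper cites. Note, however, that the paper itself does \emph{not} prove this theorem: it is stated as background, with the preceding text explicitly saying that ``all the results in this section can be found in any standard textbook covering the representation theory of finite groups'' and pointing to \cite{serre1977linear}, \cite{fulton2013representation}, \cite{steinberg2012representation}. So there is no paper proof to compare against; your write-up is essentially the argument from Serre or Steinberg, and your instinct at the end --- to either sketch the spanning argument or simply cite those references for it --- matches exactly how the paper handles the whole theorem.
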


		Now onward \textit{it is assumed that $ G $ is a finite abelian group}. 
		Since $ G $ is assumed to be abelian, there are $ |G| $ number of conjugacy classes of $ G $, and so there are $ |G| $ number of inequivalent irreducible characters of $ G $. The set of these characters form a group of order $ |G| $, known as the dual group of $ G $. More precisely, the \textit{dual group} of $G$, denoted by $ \widehat{G} $, is defined as 
		\begin{equation}\label{eq_def_dual_group}
			\widehat{G} = \{ \chi \ | \  \chi  :  G \longto S^{1} \text{ is an irreducible character of $ G $}\}.
		\end{equation} 
		The group operation is pointwise multiplication, i.e.,  $ (\chi_{1} \cdot \chi_2) (g) =  \chi_{1}(g) \chi_2(g)$ for all $ g\in G $ and $ \chi_{1},\chi_2 \in \widehat{G} $. 
		
		We have the following orthogonality relation. 
		\begin{thm}\label{thm_orthogonality_two}
			Let $ \chi_1, \chi_2 \in \widehat{G} $. Then
			\begin{equation}\label{eq_orthogonality_two}
				\iprod{\chi_{1}}{\chi_{2}}  = \begin{cases}
					1, &\quad \text{if } \chi_1 = \chi_2, \\
					0, &\quad \text{otherwise}.
				\end{cases} 
			\end{equation}
		\end{thm}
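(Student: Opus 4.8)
The plan is to give a short self-contained argument via a translation trick, and to note in passing that the statement is also an immediate specialization of Theorem~\ref{thm_orthogonality_one}. First I would observe that since each $\chi \in \widehat{G}$ takes values in $S^1$, we have $\overline{\chi(g)} = \chi(g)^{-1}$ for every $g \in G$, so the pointwise product $\chi := \chi_1 \overline{\chi_2} = \chi_1 \chi_2^{-1}$ is again an element of $\widehat{G}$ (using that $\widehat{G}$ is a group under pointwise multiplication, as recorded just before the statement). Hence it suffices to evaluate $\iprod{\chi}{\mathbf{1}} = \tfrac{1}{|G|}\sum_{g\in G}\chi(g)$ for an arbitrary $\chi \in \widehat{G}$, where $\mathbf{1}$ is the trivial character, and to show this is $1$ when $\chi = \mathbf{1}$ and $0$ otherwise; note $\chi = \mathbf{1}$ is equivalent to $\chi_1 = \chi_2$.

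Next, if $\chi = \mathbf{1}$ then $\sum_{g\in G}\chi(g) = |G|$, so the inner product equals $1$. If $\chi \neq \mathbf{1}$, choose $h \in G$ with $\chi(h) \neq 1$; since left translation $g \mapsto hg$ is a bijection of $G$,
\[
\chi(h)\sum_{g\in G}\chi(g) \;=\; \sum_{g\in G}\chi(hg) \;=\; \sum_{g'\in G}\chi(g'),
\]
so $(\chi(h)-1)\sum_{g\in G}\chi(g) = 0$, which forces $\sum_{g\in G}\chi(g) = 0$. Combining the two cases gives \eqref{eq_orthogonality_two}.

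As an alternative route, since $G$ is abelian every irreducible representation is one-dimensional, so the elements of $\widehat{G}$ are exactly the irreducible representations of $G$; two one-dimensional representations $\phi,\psi$ satisfy $\phi \sim \psi$ if and only if $\phi = \psi$ (a conjugating $1\times 1$ matrix is a nonzero scalar and commutes with everything), and with this identification \eqref{eq_orthogonality_two} is precisely \eqref{eq_orthogonality} of Theorem~\ref{thm_orthogonality_one}. I do not expect a real obstacle here: the only points deserving a word of care are that $\widehat{G}$ is genuinely closed under $\chi_1,\chi_2 \mapsto \chi_1\overline{\chi_2}$ — equivalently that $\overline{\chi}$ is again a character, which rests on $|\chi(g)| = 1$, itself a consequence of every $g$ having finite order — and the translation of ``equivalent'' into ``equal'' in dimension one.
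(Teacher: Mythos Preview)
Your argument is correct. The translation trick is the standard textbook proof: reducing to the sum $\sum_{g\in G}\chi(g)$ for a single character $\chi = \chi_1\chi_2^{-1}$ and then using a bijection $g\mapsto hg$ to force the sum to vanish when $\chi$ is nontrivial is exactly right, and your side remarks about $\overline{\chi}=\chi^{-1}$ (since $|\chi(g)|=1$) and about ``equivalent $\Leftrightarrow$ equal'' in dimension one are the only points that need checking. The alternative derivation from Theorem~\ref{thm_orthogonality_one} is also valid.

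As for comparison with the paper: the paper does not actually prove Theorem~\ref{thm_orthogonality_two}. It is stated without proof, as one of the standard results that ``can be found in any standard textbook covering the representation theory of finite groups'' (with references to Serre, Fulton--Harris, and Steinberg). So there is nothing in the paper to compare your argument against; your self-contained proof simply fills in what the paper leaves to the literature, and either of your two routes would be an acceptable way to do that.
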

		The irreducible characters of $ G $ form a basis of $ L(G) $ over $ \CC $, therefore any $ f \in L(G) $ can be written as 
		\begin{equation}\label{eq_decomposition_f}
			f(g) = \sum_{\chi_i \in \widehat{G}} \, c_{i} \, \chi_i (g), \quad \text{where } c_{i} \in \CC  \text{ for all } \chi_i \in \widehat{G}.
		\end{equation}
		It follows from Theorem \ref{thm_orthogonality_two} that $ 	c_i = \iprod{f}{\chi_i} $. Therefore,
		\begin{equation}\label{eq_ci}
			f(g) = \sum_{\chi_i \in \widehat{G}} \, \iprod{f}{\chi_i} \, \chi_i (g) = \frac{1}{\sqrt{|G|}} \sum_{\chi \in \widehat{G}} \, \widehat{f} (\chi) \chi(g),
		\end{equation}
		where the Fourier transform $ \widehat{f} \ : \ \widehat{G} \longto \CC $ is defined below.
		\begin{defn} \label{def_fourier_transform_def}
			Let $ G  $ be a finite abelian group and $ f \in L(G) $. Then the Fourier transform $ \widehat{f} \ : \ \widehat{G} \longto \CC $ is defined by
			\begin{equation}\label{def_Fourier_transform}
				\widehat{f} (\chi) = \sqrt{|G|} \, \iprod{f}{\chi} = \frac{1}{\sqrt{|G|} } \, \sum_{g \in G} \, f(g) \overline{\chi (g)}.
			\end{equation}
		\end{defn}
		The \meqref{def_Fourier_transform} and \meqref{eq_ci} define the direct and inverse Fourier transform respectively. It can be verified that the map $ f \longto \widehat{f}  $ defines a linear transformation from $ L(G) $ to $ L(\widehat{G}) $.
		
		\subsection{Walsh Functions as Characters of the Group $ (\ZZ/2\ZZ)^n $}
		
		Let $ \omega_n = e^{\frac{2 \pi i }{n}} $. Consider $ G = \ZZ/n\ZZ   $. Then $ \widehat{G} = \{\chi_0, \chi_{1}, ~\ldots~, \chi_{n-1}  \} $, where $ \chi_{k} (m) = \omega_n^{km} $ for
		$ k=0,~1,~\ldots~, n-1 $. The well-known Discrete Fourier Transform is the Fourier transform for the complex valued functions defined on the group $ G = \ZZ/n\ZZ  $.
		
		\begin{example} \label{ex_character}
			For $ G = \ZZ/2\ZZ = \{[0],[1]\} $,  the dual group $ \widehat{G} $ consists of  $ \{\chi_0, \chi_1\} $ such that 
			\[
			\chi_0([0]) = 1, ~\chi_0([1]) = 1, \qquad \chi_1([0]) = 1, ~\chi_1([1]) = -1.
			\]
			Here $ [0] $ and $ [1] $ denote the equivalence classes of $ 0 $ and $ 1 $, respectively. For easing the notation, in the following, we will just write $ 0 $ and $ 1 $ for $ [0] $ and $ [1] $, respectively. The character table of the group $ G  \ZZ/2\ZZ  = \{0,1\}  $ is shown in Table~\ref{tab_one}. 
			\begin{table}[ht]
				\centering
				\begin{tabular}{rrr}
					\hline \\
					& $ 0 $ & $ 1 $   \\
					\hline \\
					$ \chi_{0} $	& $ 1 $ & $ 1 $\\ \\
					$ \chi_{1} $	& $ 1 $ & $- 1 $   \\ \\
					\hline \\
				\end{tabular}
				\caption{The character table of $ G = \ZZ/2\ZZ  = \{0,1\}  $.}\label{tab_one}
			\end{table}
			We recall the Hadamard gate $ H $, which sends the qubit $ \ket{0} $ to $ \frac{1}{\sqrt{2}} \left( \ket{0} + \ket{1}\right) $ and the qubit $ \ket{1} $ to $ \frac{1}{\sqrt{2}} \left( \ket{0} - \ket{1}\right) $. It is interesting to note the similarity of the character table of $ G = \ZZ/2\ZZ $ with the transformation matrix of the Hadamard gate $ H $ given by
			\[
			H \equiv \, \frac{1}{\sqrt{2}} \,  \mat{1}{1}{1}{-1}.
			\]   
		\end{example}
		Let $ \chi_i \in \widehat{G_1} $ and $ \chi_j \in \widehat{G_2} $. Then $ \chi_i \otimes \chi_j \in  \widehat{G_1 \times G_2} $, such that $ (\chi_i \otimes \chi_j) (g_1, g_2) = \chi_i(g_1) \chi_j(g_2) $. It can be shown that if $ \widehat{G_1} = \{\chi_{i} \ | \ i=1,~\ldots~, m\} $ and   $ \widehat{G_2} = \{\chi_{j} \ | \ j=1,~\ldots~, n\} $, then 
		$ \widehat{G_1 \times G_2} = \{ \chi_{(ij)} := \chi_{i} \otimes \chi_{j} \ | \  i=1,~\ldots~, m, j=1,~\ldots~, n \} $.
		
		Note that, in the remainder of the paper, the string of binary digits will be represented without parentheses to ease the notation (where it is clear from the context). 
		For example, depending upon the context, $ \chi_{(00)}$ and $ \chi_{(01)} $ will be denoted as $ \chi_{00} $ and $ \chi_{01} $, respectively. 
		\begin{example}
			For $ G = \ZZ/2\ZZ \times \ZZ/2\ZZ  = \{0,1\} \times \{0,1\} = \{ 00, 01, 10, 11\} $,  the dual group $ \widehat{G}  = \{ \chi_{(ij)} = \chi_i \otimes \chi_j \ | \ i=0,1, j=0,1 \} $. 
			
			Again we note the similarity between the  character table of $ G = \ZZ/2\ZZ \times \ZZ/2\ZZ $ (see Table~$ \ref{tab_two} $) and the transformation matrix of \[ H \otimes H \equiv \, \frac{1}{2} \, \begin{pmatrix*}[r]
				$ 1 $ & $ 1 $  & $ 1 $  & $ 1 $  \\
				$ 1 $ & $ -1 $ & $ 1 $  & $ -1 $ \\ 
				$ 1 $ & $ 1 $ & $ -1 $ & $ -1 $ \\ 
				$ 1 $ & $ -1 $ & $ -1 $  & $ 1 $ \\ 
			\end{pmatrix*} .\]  
			
			\begin{table}[ht]
				\centering
				\begin{tabular}{rrrrr}
					\hline \\
					& $ 00 $ & $ 01 $ & $ 10 $ & $ 11 $  \\
					\hline \\
					$ \chi_{00} $	& $ 1 $ & $ 1 $  & $ 1 $  & $ 1 $  \\ \\
					$ \chi_{01} $	& $ 1 $ & $ -1 $ & $ 1 $  & $ -1 $ \\ \\
					$ \chi_{10} $	& $ 1 $ & $ 1 $ & $ -1 $ & $ -1 $ \\ \\
					$ \chi_{11} $	&  $ 1 $ & $ -1 $ & $ -1 $  & $ 1 $ \\ \\
					\hline \\
				\end{tabular}
				\caption{The character table of $ G = \ZZ/2\ZZ \times \ZZ/2\ZZ  = \{0,1\} \times \{0,1\} = \{ 00, 01, 10, 11\} $. Note that $ \chi_{ij}  $ denotes  $ \chi_{(ij)}  $ in the above table.}\label{tab_two}
			\end{table}  
		\end{example}
		The above examples can easily be generalized to the group $ G = \{ 0,1\}^n $, which is a group of order $ 2^n $ with elements identified with all the binary words of length $ n $.  The elements of this group can also be written as integers $ 0,~1,~2,~\ldots~, ~2^{n}-1 $, where the integer $ k $ corresponds to the string $ (k_{n-1}k_{n-2}~\ldots~ k_{0}) $ which is the binary representation of the integer $ k $, i.e., $ k = \sum_{i=0}^{n-1} \, k_i 2^i  $ with $ k_i \in \{0,1\} $. For example, for $ n=2 $, the elements of $ G = \{ 0,1\}^2 $ can be written as  $ \{0,~1,~2,~3\}  $, where $ 0 $ is identified with $ (00) $, $ 1 $ is with $ (01) $, $ 2 $ is with $ (10) $ and $ 3 $ is with $ (11) $.  A similar convention is followed for the character $ \chi_k $, i.e., 
		$ \chi_k = \chi_{k_{n-1}} \otimes \chi_{k_{n-2}} \otimes ~~\ldots~~ \otimes  \chi_{k_{1}} \otimes \chi_{k_{0}}  = \chi_{(k_{n-1}k_{n-2} ~\ldots~,  k_1 k_{0})}$. For example, $ \chi_{2} $ denotes $ \chi_{(10)} = \chi_1 \otimes \chi_0 $, and $ \chi_{3}$ denotes $ \chi_{(11)} = \chi_{1} \otimes \chi_{1} $. It is clear that using this convention for the group $ G = ( \ZZ/2\ZZ )^n  = \{0,1\}^n$  the dual group is $ \widehat{G} = \{ \chi_{k} \ | \  k=0,~1,~\ldots~, 2^n -1 \} $. More explicitly, suppose $ k $ has the binary representation  $  (k_{n-1} k_{n-2}k_1 k_{0}) $  and  $ x \in G $ such that it has the binary representation $  (x_{n-1} x_{n-2}~\ldots~ x_1 x_{0}) $. Then
		\begin{align}
			\chi_k (x) &= \left(\chi_{k_{n-1}} \otimes \chi_{k_{n-2}} \otimes ~\ldots~ \chi_{k_{1}} \otimes \chi_{k_{0}}\right)(x) \nonumber \\
			& = \chi_{k_{n-1}}(x_{n-1}) \chi_{k_{n-2}}(x_{n-2})  ~\ldots~ \chi_{k_{1}}(x_1)  \chi_{k_{0}} (x_0) \nonumber \\
			& = (-1)^{\left(\sum_{i = 0}^{n-1} \, x_i k_i \right)} \label{eq_explicit_def_chik}
		\end{align}

		\begin{defn}
			Let $ N = 2^n $ be a positive integer and let $ G = (\ZZ/2\ZZ)^n = \{0,1\}^n  $. Assume $ 0 \leq k \leq N$ to be an integer. 
			Then the character $ \chi_{k} \in \widehat{G}  $ can be identified with the $ k^{\text{th}} $ Walsh function of order $ N $. 
		\end{defn}
		
		\begin{defn}
			Let $ N = 2^n $ be a positive integer. Suppose the interval $ [0,1] $ is divided into $ N $ equal sub-intervals each of length $ \frac{1}{N} $. These sub-intervals are ordered from left to right by assigning binary equivalent of numbers from $ 0 $, $ 1 $, $ 2 $, $ ~\ldots~ $, $ N-1 $, with $ 0 $ assigned to the left most interval. It is clear that any $ x \in [0,1] $ will lie in one of these intervals. Then the $ k^{\text{th}} $ Walsh function  of order $ N $ is the function $ W_k(x) : [0,1] \longto \{1,-1\} $   defined by 
			\begin{equation}\label{eq_def_walsh}
				W_k(x) := \begin{cases}
					&\chi_{k}(i), \qquad \frac{i}{N} \leq x < \frac{i+1}{N}, \,  0 \leq i \leq N-1,\\
					&\chi_{k}(N-1), \qquad $ x = 1 $.
				\end{cases} 
			\end{equation}
		\end{defn}

		\begin{example}
			For $ N =2^2 $ there are $ 4 $ Walsh functions as shown in Figure \ref{fig_walsh_four}. It is instructive to note the correspondence between the rows of the character table of $ G = \ZZ/2\ZZ \times \ZZ/2\ZZ  $ in Table \ref{tab_two} and the graph of  Walsh functions shown in Figure \ref{fig_walsh_four} below.
			
			\begin{figure}[ht]
				
					
					\begin{center}
						\begin{subfigure}[t]{.3\textwidth}
							\centering
							\includegraphics[width=\linewidth]{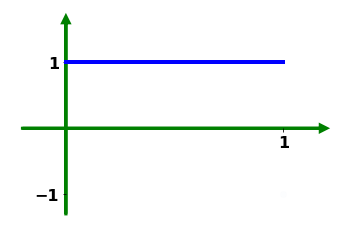}
							\caption{$ W_0(x) $}
						\end{subfigure}
						\hspace{1cm}
						\begin{subfigure}[t]{.3\textwidth}
							\centering
							\includegraphics[width=\linewidth]{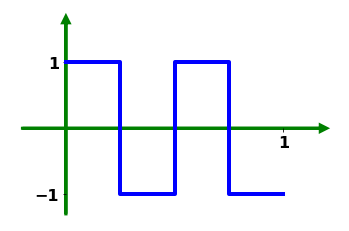}
							\caption{$ W_1(x) $}
						\end{subfigure}
						
						\medskip
						
						\begin{subfigure}[t]{.3\textwidth}
							\centering
							\includegraphics[width=\linewidth]{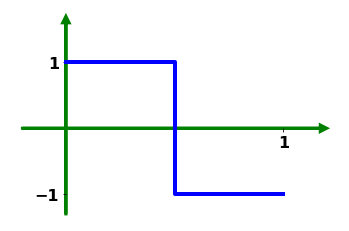}
							\caption{$ W_2(x) $}
						\end{subfigure}
						\hspace{1cm}
						\begin{subfigure}[t]{.3\textwidth}
							\centering
							\includegraphics[width=\linewidth]{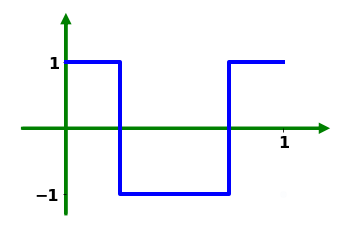}
							\caption{$ W_3(x) $}
						\end{subfigure} 
					\end{center} 
				\caption{Walsh functions in the natural (or Hadamard) ordering for $ N=4 $. }\label{fig_walsh_four}
			\end{figure}
			
		\end{example}

		\begin{figure}[htp!]
			\begin{center}
				\begin{subfigure}[t]{\mywidth}
					\centering
					\includegraphics[width=\linewidth,height=\myheight]{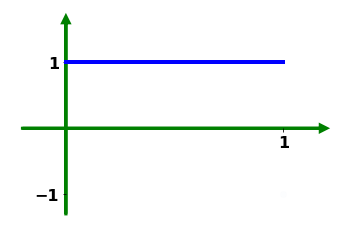}
					\caption{$ W_0(x) $}
				\end{subfigure}
				\hspace{1cm}
				\begin{subfigure}[t]{\mywidth}
					\centering
					\includegraphics[width=\linewidth,height=\myheight]{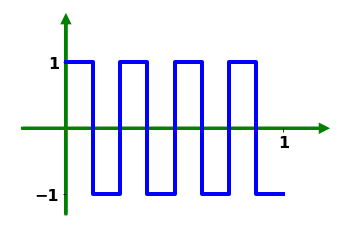}
					\caption{$ W_1(x) $}
				\end{subfigure}
				
				\medskip
				
				\begin{subfigure}[t]{\mywidth}
					\centering
					\includegraphics[width=\linewidth,height=\myheight]{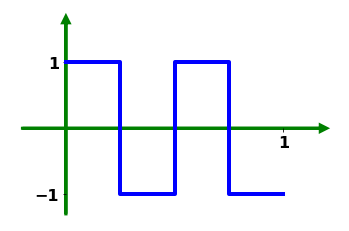}
					\caption{$ W_2(x) $}
				\end{subfigure}
				\hspace{1cm}
				\begin{subfigure}[t]{\mywidth}
					\centering
					\includegraphics[width=\linewidth,height=\myheight]{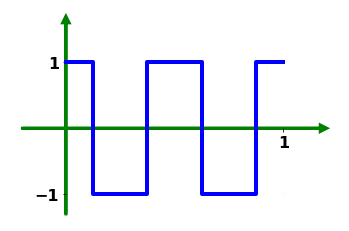}
					\caption{$ W_3(x) $}
				\end{subfigure}

				\medskip
				
				\begin{subfigure}[t]{\mywidth}
					\centering
					\includegraphics[width=\linewidth,height=\myheight]{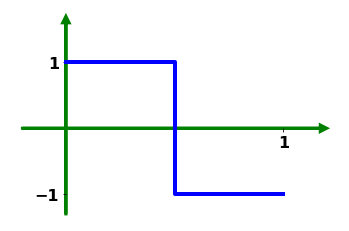}
					\caption{$ W_4(x) $}
				\end{subfigure}
				\hspace{1cm}
				\begin{subfigure}[t]{\mywidth}
					\centering
					\includegraphics[width=\linewidth,height=\myheight]{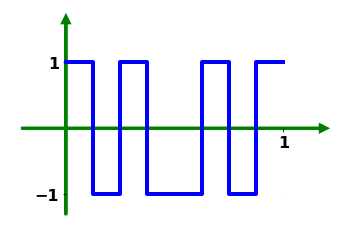}
					\caption{$ W_5(x) $}
				\end{subfigure}

				\medskip
				
				\begin{subfigure}[t]{\mywidth}
					\centering
					\includegraphics[width=\linewidth,height=\myheight]{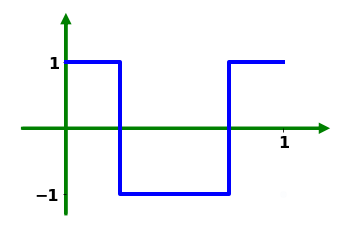}
					\caption{$ W_6(x) $}
				\end{subfigure}
				\hspace{1cm}
				\begin{subfigure}[t]{\mywidth}
					\centering
					\includegraphics[width=\linewidth,height=\myheight]{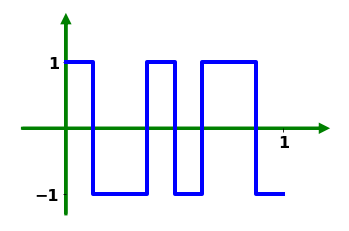}
					\caption{$ W_7(x) $}
				\end{subfigure} 
				
			\end{center} 
			
			\caption{Walsh functions in the natural (or Hadamard) ordering for $ N=8 $.}
		\end{figure}

		We note that in $ \meqref{eq_def_walsh} $ the $ k^{\text{th}} $ Walsh function $ W_k(x) $ of order $ N $ is defined in terms of characters  $ \chi_{k} \in \widehat{G}  $, where $ G = (\ZZ/2\ZZ)^n = \{0,1\}^n $ and $ k =0,~1,~2, ~\ldots~,~ N-1 $. It is clear from the above discussion that the character theory of finite groups provides valuable insights into the structure of Walsh functions.

		\subsection{Walsh-Hadamard Transform}\label{subsec:wh_transform}
		
		Let $ G = (\ZZ/2\ZZ)^n = \{0,1\}^n   $ and $ f \in L(G) $.   Then from our earlier discussion (see Definition \ref{def_fourier_transform_def} and $ \meqref{eq_explicit_def_chik} $) 
		we have
		\begin{equation} \label{eq_def_walsh_transform}
			\widehat{f} (\chi_k) = \frac{1}{\sqrt{|G|}} \sum_{g \in G} \, f(g) \chi_k (g) 
			= \frac{1}{\sqrt{N}} \sum_{m = 0}^{2^n -1} \, f(m) \chi_k (m) 
			= \frac{1}{\sqrt{N}}  \sum_{m = 0}^{2^n -1} \, f(m) (-1)^{\sum_{i=0}^{n-1} \, m_i k_i }.
		\end{equation}
		where  $ k $ has the binary representation  $  k_{n-1} k_{n-2}k_1 k_{0} $  and $ m $ has the binary representation $  m_{n-1} m_{n-2}m_1 m_{0} $ and $ N= |G| = 2^n $. 
		Similarly, following \meqref{eq_ci}, we have
		\begin{equation}\label{eq_inverse_walsh_tansform}
			f(m) =  \frac{1}{\sqrt{N}}  \sum_{k = 0}^{N -1} \, \widehat{f}(\chi_k) \chi_k(m) = \frac{1}{\sqrt{N}}  \sum_{k = 0}^{N -1} \, \widehat{f}(\chi_k)  (-1)^{\sum_{i=0}^{n-1} \, m_i k_i }.
		\end{equation}

		\begin{defn}
			Let $ {\bf{v}} = [f(0) \quad  f(1) \quad  f(2) \quad  ~\ldots~ \quad f(N-1) ]^T$  be a vector with  $ N = 2^n $ components. Then its Walsh-Hadamard transform is the vector defined by
			$ {\bf{\widehat{v}}} = [\widehat{f} (\chi_0)\quad  \widehat{f} (\chi_1) \quad \widehat{f} (\chi_2) \quad ~\ldots~ \quad \widehat{f} (\chi_{N-1})]^T  $, where $ \widehat{f} (\chi_k)  $ is defined by \meqref{eq_def_walsh_transform}, for $ k=0,~1,~2,~\ldots~,~N-1 $. Similarly, given $ {\bf{\widehat{v}}} $, its inverse Walsh-Hadamard transform is $ {\bf{v}} $ such that the component $ f(m) $ of $ {\bf{v}} $ is defined by \meqref{eq_inverse_walsh_tansform}.
		\end{defn} 
		We will write \[
		[f(0) \quad f(1) \quad f(2) \quad ~\ldots~ \quad f(N-1) ]^T  \longleftrightarrow [\widehat{f} (\chi_0) \quad  \widehat{f} (\chi_1) \quad \widehat{f} (\chi_2) \quad ~\ldots~ \quad \widehat{f} (\chi_{N-1}) ]^T 
		\]
		to denote the pair of a vector and its Walsh-Hadamard transform.

		\begin{example}\label{ex_walsh_tranform}
			The Walsh-Hadamard transform of the vector
			$ {\bf{v}} = [f(0) \quad f(1) \quad f(2) \quad  f(3) ]^T $ is 
			
			$$ \widehat{\bf{v}} = \frac{1}{2}
			\begin{pmatrix*}[r]
				\, f(0) + f(1) + f(2) + f(3)  \, \\
				\, f(0) - f(1) + f(2) - f(3)  \, \\
				\, f(0) + f(1) - f(2) - f(3)  \, \\
				\, f(0) - f(1) - f(2) + f(3)  \,
			\end{pmatrix*}.
			$$
			One can compute this transform by computing 
			\[ \left(H \otimes H \right)  \, {\bf{v}} =   \, \frac{1}{2} \, \begin{pmatrix*}[r]
				$ 1 $ & $ 1 $  & $ 1 $  & $ 1 $  \\
				$ 1 $ & $ -1 $ & $ 1 $  & $ -1 $ \\ 
				$ 1 $ & $ 1 $ & $ -1 $ & $ -1 $ \\ 
				$ 1 $ & $ -1 $ & $ -1 $  & $ 1 $ \\ 
			\end{pmatrix*} \,  \begin{pmatrix*}[r]
				$ f(0) $  \\
				$ f(1) $  \\
				$ f(2) $  \\
				$ f(3) $ 
			\end{pmatrix*}. \] 
			
		\end{example}
		It can be checked that, Ex.~\ref{ex_walsh_tranform} generalizes for computing the Walsh-Hadamard transform of an input vector of of size $ N=2^n $. In fact, for any integer $ N=2^n $, the Walsh-Hadamard transform of $ {\bf{v}} = [f(0) \quad  f(1) \quad ~\ldots~ \quad  f(N-1) ]^T $ is given by 
		\begin{equation}\label{eq_hadamard_transform}
			(H^ {\otimes n}   \, \bf{v}).
		\end{equation}
		
		A naive approach to compute the Walsh-Hadamard transform involving  matrix-vector multiplication is of the order $ \mathcal{O}(N^2) $ where $ N=2^n $. 
		Although, in practice one computes the Walsh-Hadamard transform by employing a faster classical algorithm, namely the Fast Walsh-Hadamard Transform~\cite{geadah1977natural},~\cite{beauchamp1975walsh}. The classical Fast Walsh-Hadamard Transform  
		algorithm has the time complexity of the order of $ \mathcal{O} (N \log_2 (N)) $ for computing the Walsh-Hadamard transform of an input vector of size $ N=2^n $.

		\subsection{Quantum Walsh-Hadamard Transform}\label{subsec:quantum_wh_transform}
		
		The quantum implementation of Walsh-Hadamard transform involves two main steps. First, preparing the initial state  $ \sum_{k=0}^{N-1}\, f(k) \ket{k} $, and second applying Hadamard gates $ H^{\otimes n} $ on it. Here $ N=2^n $ and it is also assumed that $ \norm{f} =1 $, or equivalently, $ \sum_{k=0}^{N-1} \, (f(k))^2 = 1 $.
		It can be verified that,
		\[
		H^{\otimes n} \left[  \, \sum_{k=0}^{N-1}\, f(k) \ket{k} \right]   = \frac{1}{\sqrt{N}} \sum_{k=0}^{N-1}\,  \left(\sum_{m = 0}^{N -1} \, f(m) (-1)^{\sum_{i=0}^{n-1} \, m_i k_i }\right) \,\ket{k} =   \sum_{k = 0}^{N -1} \, \widehat{f} (\chi_k) \,\ket{k}. 
		\]

		\section{Hybrid Classical-Quantum Approach for Walsh-Hadamard Transform}\label{sec:hybrid-cq-wht}
		
		\begin{figure}[ht]
			\centering
			\includegraphics[width=0.5\linewidth]{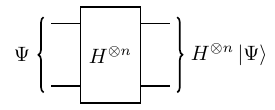}
			\caption{Quantum circuit for computing Walsh-Hadamard transform.}\label{fig_quantum_hadamard_n}
		\end{figure}

		A variation of quantum Walsh-Hadamard transform is at the heart of our algorithm to solve nonlinear differential and integral equations. 
		Indeed, the Hadamard gate is one of the most useful quantum gates and the Walsh-Hadamard transform is the first step in many important quantum algorithms. It was discussed earlier that the Walsh-Hadamard transform of $ {\bf{v}} = [f(0) \quad  f(1) \quad ~\ldots~ \quad  f(N-1) ]^T $ is given by 	$   (H^ {\otimes n}   \, \bf{v}) $. Assuming that  $ \norm{{\bf{v}}} =1 $, a simple quantum circuit consisting of $ n $ Hadamard gates can compute the Walsh-Hadamard  transform of an input vector $ \bf{v} $ of size $ N=2^n $ (see Fig.~$ \ref{fig_quantum_hadamard_n} $). However, the difficulty in this simple approach lies in the measurement. For example, consider the case of $ n=2 $ qubits with $  {\bf{v}} = [f(0) \quad  f(1) \quad f(2) \quad f(3) ]^T $. The circuit for computing the Walsh-Hadamard  transform $ \widehat{\bf{v}} $ in this case is shown in Figure~\ref{fig_quantum_hadamard}. The input state for the circuit in Figure~\ref{fig_quantum_hadamard} is prepared to be 
		\[
		{\bf{v}} = f(0) \ket{00} + f(1) \ket{01} + f(2) \ket{10} + f(3) \ket{11}.
		\]

		\begin{figure}[ht]
			\centering
			\includegraphics[width=0.5\linewidth]{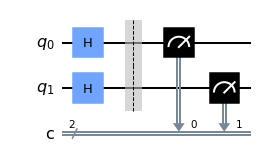}
			\caption{Quantum circuit for computing Walsh-Hadamard transform for $ n=2 $ qubits.}\label{fig_quantum_hadamard}
		\end{figure}
		However, one can only find the square of the amplitudes of the Walsh-Hadamard transform values by carrying out the measurement in this case. Since the input sequence is assumed to be real, the components of Walsh-Hadamard transformed vector $ \widehat{\bf{v}} $ would also be real. However, the  components of  $\, \widehat{\bf{v}} \, $  may be positive or negative and this sign information is lost on carrying out the measurement. We tackle this problem by using the following simple observation.

		\begin{lem}
			Let \[ [f(0) \quad f(1) \quad f(2) \quad ~\ldots~ \quad f(N-1) ]^T  \longleftrightarrow [\widehat{f} (\chi_0) \quad  \widehat{f} (\chi_1) \quad \widehat{f} (\chi_2) \quad ~\ldots~ \quad \widehat{f} (\chi_{N-1}) ]^T\] be a pair of a sequence and its Walsh-Hadamard transform. If
			\begin{equation}\label{eq_assumption}
				f(0) > \sum_{k=1}^{N-1} |f(k)|,
			\end{equation}
			then all the components of $ [\widehat{f} (\chi_0) \quad \widehat{f} (\chi_1) \quad \widehat{f} (\chi_2) \quad ~\ldots~ \quad \widehat{f} (\chi_{N-1}) ]^T $ are positive.
		\end{lem}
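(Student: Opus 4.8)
The plan is to work directly from the explicit formula for the Walsh--Hadamard transform derived earlier, namely
\[
\widehat{f}(\chi_k) = \frac{1}{\sqrt{N}} \sum_{m=0}^{N-1} f(m)\,(-1)^{\sum_{i=0}^{n-1} m_i k_i},
\]
see \meqref{eq_def_walsh_transform}. The key observation is that the $m=0$ term is special: when $m=0$ all binary digits $m_i$ vanish, so the sign $(-1)^{\sum_i m_i k_i}$ equals $+1$ \emph{for every} $k$. Hence, isolating that term, for each $k \in \{0,1,\dots,N-1\}$ we can write
\[
\sqrt{N}\,\widehat{f}(\chi_k) = f(0) + \sum_{m=1}^{N-1} f(m)\,(-1)^{\sum_{i=0}^{n-1} m_i k_i}.
\]

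Next I would bound the remaining sum from below by the triangle inequality, using $(-1)^{(\cdot)} \in \{+1,-1\}$:
\[
\sum_{m=1}^{N-1} f(m)\,(-1)^{\sum_{i=0}^{n-1} m_i k_i} \;\ge\; -\sum_{m=1}^{N-1} \lvert f(m)\rvert.
\]
Combining this with the hypothesis \meqref{eq_assumption}, $f(0) > \sum_{k=1}^{N-1}\lvert f(k)\rvert$, gives $\sqrt{N}\,\widehat{f}(\chi_k) \ge f(0) - \sum_{m=1}^{N-1}\lvert f(m)\rvert > 0$, and since $\sqrt{N} > 0$ we conclude $\widehat{f}(\chi_k) > 0$. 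As $k$ was arbitrary, every component of the transformed vector is positive.

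There is no real obstacle here; the argument is a one-line estimate once the $m=0$ term is separated out. The only point worth stating carefully is \emph{why} the $m=0$ coefficient appears with a uniform $+1$ sign across all rows of the transform matrix (equivalently, why the first column of $H^{\otimes n}$ is the all-ones vector up to normalization), which is exactly what makes the hypothesis of strict dominance of $f(0)$ the natural sufficient condition. If desired, one can also record the immediate converse-flavoured remark that the same computation applied with $-1$ in place of $(-1)^{(\cdot)}$ shows the bound is tight in general, but that is not needed for the statement.
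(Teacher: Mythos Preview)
Your proof is correct and follows essentially the same approach as the paper: both observe that each transformed component has the form (up to the normalization constant) $f(0) \pm f(1) \pm \cdots \pm f(N-1)$, with $f(0)$ always carrying a $+1$ sign, so the strict dominance hypothesis forces positivity. Your version is slightly more explicit in deriving this from the formula \meqref{eq_def_walsh_transform} and in articulating why the $m=0$ term always appears with coefficient $+1$, but the underlying argument is identical.
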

		\begin{proof}
			It is a consequence of the definition of Walsh-Hadamard transform that each component of the transformed vector $$ [\widehat{f} (\chi_0) \quad \widehat{f} (\chi_1) \quad \widehat{f} (\chi_2) \quad ~\ldots~ \quad \widehat{f} (\chi_{N-1}) ]^T $$ is of the form of a constant multiple of  $ f(0) \pm f(1) \pm f(2) \pm \quad ~\ldots~ \quad \pm f(N-1) $. The given hypothesis  $ f(0) > \sum_{k=1}^{N-1} |f(k)| $ then implies that all the components of the vector $$ [\widehat{f} (\chi_0) \quad \widehat{f} (\chi_1) \quad \widehat{f} (\chi_2) \quad ~\ldots~ \quad \widehat{f} (\chi_{N-1}) ]^T $$ are positive.
		\end{proof}
		
		\begin{example}
			The Walsh-Hadamard transform of $ {\bf{v}} = [f(0) \quad f(1) \quad f(2) \quad  f(3) ]^T $ is $$ \widehat{\bf{v}} = \frac{1}{2}
			\begin{pmatrix*}[r]
				\, f(0) + f(1) + f(2) + f(3)  \, \\
				\, f(0) - f(1) + f(2) - f(3)  \, \\
				\, f(0) + f(1) - f(2) - f(3)  \, \\
				\, f(0) - f(1) - f(2) + f(3)  \,
			\end{pmatrix*}.
			$$
			It is clear that if $  f(0) > |f(1)| + |f(2)| + |f(3)|  $, then all the components of the vector $ \widehat{\bf{v}} $ are positive.   In this case, one can directly compute the Walsh-Hadamard Transform by carrying out measurement in the circuit shown in Figure~\ref{fig_quantum_hadamard}. Since all the components of the transformed vector are positive, there is no ambiguity about signs when computing the components from the probability measurements.   
		\end{example}
	 
		From the above example, it follows  that 
		if the input vector $ {\bf{v}} =  [f(0) \quad f(1) \quad f(2) \quad ~\ldots~ \quad  f(N-1) ]^T  $ is normalized (i.e., $ \norm{\bf{v}} =1 $) and $ \meqref{eq_assumption} $ is satisfied, then it is easy to compute the Walsh-Hadamard transform. For example, the circuit shown in Figure~\ref{fig_quantum_hadamard} can be used for computing the Walsh-Hadamard transform for $ N=4 $.
		Of course, the condition in $ \meqref{eq_assumption} $ does not always hold. In such cases, we proceed as follows. Define  $  b_0 = \epsilon + \sum_{k=0}^{N-1} \, |f(k)| $, where $ \epsilon $ is an arbitrary positive number. Then the Walsh-Hadamard transform of $ {\bf{v_1}} = [b_0 \quad f(1) \quad  f(2) \quad ~\ldots~ \quad f(N-1) ]^T$ is given by 
		\begin{equation}\label{eq_shift_trick}
			\widehat{\bf{\, \, v_1}} = \widehat{\bf{v}} +  \widehat{\bf{t}}, \qquad \text{ where  $\widehat{\bf{t}} = \delta \, [ 1 \quad 1 \quad ~\ldots~ \quad 1  ]^T $ is vector of size $ N $ with $ \delta = \frac{1}{\sqrt{N}} (b_0 - f(0)) $.} 
		\end{equation}
		
		We note that vectors $ \bf{v} $ and ${\bf{v_1}} $ differ only in the first component. A quantum state is prepared based on the normalized state vector $ \frac{{\,\,\bf{v_1}}}{\norm{\bf{v_1}}} $.  Clearly, all components of the normalized state vector $\frac{\bf{v_1}}{\norm{\bf{v_1}}} $ are non-negative and the condition given in $ \meqref{eq_assumption} $ is satisfied. Therefore, the  Walsh-Hadamard transform $ \frac{\widehat{\,\,\bf{v_1}}}{\norm{\bf{v_1}}}  $ can be obtained using the quantum Hadamard gates and measurement operations without any ambiguity about the signs.  Then using $ \meqref{eq_shift_trick} $, one can compute $ \widehat{\bf{v}} = 	\widehat{\bf{\, \, v_1}} -  \widehat{\bf{t}} $. The hybrid classical-quantum approach explained above is employed in Algorithm~$ \ref{alg_QWHT} $ to compute the Walsh-Hadamard transform of a given input vector.

		\begin{algorithm}[H] \label{alg_QWHT}
			\DontPrintSemicolon
			\KwInput{The input vector $ A = [a_0 \quad a_1 \quad a_2 \quad  ~\ldots~ \quad a_{N-1} ]^{T} $ where $ N =2^n $ is a positive integer and $ a_i \in \RR $ for $ i=0 $ to $ i=N-1 $. }
			\KwOutput{The Walsh-Hadamard transform of the input vector.}
			\Fn{$ \QWHT $ (A)}{
					$ b_0 = \epsilon + \sum_{k=0}^{N-1} \, |a_k| $ \tcp*{Here $ \epsilon $ is any positive number.}
					$   c = \sqrt{\left[ b_0^2 + \sum_{k=1}^{N-1} a_k^2 \right]}$ \tcp*{ Let $ \widetilde{A} = [b_0 \quad a_1 \quad a_2, ~\ldots~ \quad a_{N-1} ]^{T}$. Then $ c = \norm{\widetilde{A}} $ } 
					Prepare the state $ \ket{\Psi} = \frac{b_0}{c} \ket{0} + \sum_{k=1}^{N -1}\, \frac{a_k}{c} \ket{k}$ using $ n $ qubits. \tcp*{Initialize the state $ \ket{\Psi}  $ with $ \frac{\widetilde{A}}{\norm{\tilde{A}}}$.}
					Apply $ H^{\otimes } $ on $ \ket{\Psi} $. \\
					Measure all the $ n $ qubits to compute the probability $ p_k $ of obtaining the state $ \ket{k} $, for $ k=0 $ to $ 2^n-1 $. \\
					$ \delta = \frac{1}{\sqrt{N}}(b_0 - a_0 )$ \\
					\Return{ the vector $  [c\sqrt{p_0} - \delta \quad c\sqrt{p_1} - \delta \quad c\sqrt{p_2} - \delta \quad ~\ldots~ \quad c\sqrt{p_{N-1}} - \delta ]^{T} $}
			}
			\caption{A hybrid classical-quantum algorithm for computing the Walsh-Hadamard transform $ \QWHT(A) $ of a given input vector $ A $.}
		\end{algorithm}
		
		We note that the parameter $ \epsilon $ ensures that Algorithm~$ \ref{alg_QWHT} $ also works for the special case when the $ \norm{A} =0 $.

		\subsection{Computational Complexity}\label{sec:comp_complexity}

		The computational complexity of the classical Fast Walsh-Hadamard Transform~\cite{geadah1977natural} for an input vector of size $ N $ is of the order of $ \mathcal{O} (N \log_2(N)) $ additions and subtractions. Here we will show that our proposed hybrid classical-quantum algorithm for computing the Walsh-Hadamard transform has a complexity of $ \mathcal{O}(N) $.

		In order to make a more precise comparison of the performances of the classical and the hybrid algorithms, we consider an input vector of size $ N $ such that each of the $ N $ components is $ K $ bits, and provide a relevant estimate of the complexity. 
		Since the classical Fast Walsh-Hadamard Transform~\cite{geadah1977natural} for an input vector of size $ N $ is of the order of $ \mathcal{O} (N \log_2(N)) $ additions and subtractions and the addition or subtraction of two $ K $ bits numbers is of the order of $ \mathcal{O}(K) $ (using the elementary school algorithms of addition and subtraction with carry and borrow), respectively, the classical Fast Walsh-Hadamard Transform will be of the order of  $ \mathcal{O}(N \log_2(N) K)$. 
		
		Now we consider a breakdown of the computational complexity of  Algorithm~$ \ref{alg_QWHT} $ by considering the computational complexity of its individual steps.
		Note that step 2 in Algorithm~$ \ref{alg_QWHT} $ involves additions with associated complexity $  \mathcal{O}(NK) $. Step 3 involves computation of squares, additions and a square-root.   We note that the complexity of computing the square-root of a number is of the same order as that of multiplication using Newton's method. Since multiplication of two $ K $ bits number is $ \mathcal{O} (K\log_2(K)) $ as given by Harvey-Hoeven algorithm~\cite{harvey2021integer}, Step 3 has a complexity of  $ \mathcal{O}(N  K \log_2(K))$. Similarly, steps 7 and 8 put together have a complexity of $ \mathcal{O}(N  K \log_2(K))$. It is important to note that step 5 involves quantum Hadamard gates which enable the computation of  Hadamard transform with a complexity of $ \mathcal{O}(1) $. All the steps discussed so far (i.e., steps 2, 3, 5, 7 and 8) together have a complexity of $ \mathcal{O}(N  K \log_2(K))$. The cost of state preparation and measurement are often ignored in the complexity analysis of quantum algorithms. Here we will assume that the cost of state-preparation (step 4) and measurement (step 6) is bounded by $ \mathcal{O}(N  K \log_2(K))$. 
		Therefore, our proposed hybrid classical-quantum algorithm, Algorithm \ref{alg_QWHT}, is of the order of  $ \mathcal{O}(N  K \log_2(K))$. It follows that for a fixed  $ K $   our proposed algorithm is of the order of $ \mathcal{O}(N) $, whereas as mentioned earlier, the classical Fast Walsh-Hadamard Transform is of the order of $ \mathcal{O}(N \log_2(N)) $.

		\begin{remark}
			\begin{enumerate}[(a)] \,
				\item The Harvey-Hoeven algorithm for multiplication~\cite{harvey2021integer} is of order  $O (K\log_2(K)) $ only for extremely large numbers. The algorithm is  $O (K\log_2(K)) $ only if $ K \geq 2^{d^{12}} $ with $ d = 1729 $. Although they describe modifications in their algorithm, which will reduce $ d $ to $ 9 $. 
				\item Even if one uses the classical Karatsuba algorithm, which is of the order of $ \displaystyle \mathcal{O}(K^{1.585}) $, in computing  multiplications and square-roots in Algorithm \ref{alg_QWHT}, it is obvious that Algorithm \ref{alg_QWHT} will be of  $ \mathcal{O}(N) $.
			\end{enumerate}
		\end{remark}
		
		In the next section, using some simple examples, we illustrate how Walsh-Hadamard basis functions can be used for representing functions and performing integration and differentiation operations on them. This will set the stage for our proposed hybrid classical-quantum approach for the solution of nonlinear ordinary differential equations using Walsh-Hadamard basis functions and Algorithm \ref{alg_QWHT}.   
		
		\section{Walsh-Hadamard Basis: Representation, Integration and Differentiation of arbitrary functions}\label{sec:representation_wh_basis}

		Any piecewise continuous function $ f $ on $ [0,1] $ can be approximated by a discretizing the function and then the discretized version can be be expressed as a linear combination of Walsh functions. To demonstrate the use of Walsh-Hadamard transform for representation and integration of functions, especially in the context of the proposed hybrid classical-quantum approaches, we provide some illustrative examples. It is important to note that we consider the natural ordering of Walsh functions in our formulations unlike most works in the literature that are based on the sequency ordering. This consideration results in expressions for integration and differentiation matrices (as discussed below) that are slightly different from those commonly available in the literature (e.g., see \cite{beer1981walsh}, \cite{ahner1988walsh}, \cite{gnoffo2014global}, \cite{gnoffo2015unsteady}).
		\begin{example}
			Consider the function $ \cos (\pi t) $ on $ 0 \leq t \leq 1 $. The interval $ [0,1] $ is divided into $ N=4 $ sub-intervals and the function $ \cos (\pi t) $ is discretized as
			\[
			\cos (\pi t) \simeq \left[ \cos \left(  \frac{\pi}{8} \right) \quad \cos \left(\frac{3 \pi}{8}\right) \quad \cos \left( \frac{5\pi}{8} \right) \quad \cos \left(  \frac{7\pi}{8} \right) \right]^T .
			\]
			Then the computation of Walsh-Hadamard transform gives 
			\[
			\left[ \cos \left(  \frac{\pi}{8} \right) \quad  \cos \left(\frac{3 \pi}{8}\right) \quad \cos \left( \frac{5\pi}{8} \right) \quad \cos \left(  \frac{7\pi}{8} \right) \right]^T \longleftrightarrow \frac{1}{2}
			\left[0 \quad  1.08 \quad 2.61 \quad 0 \right]^T.
			\]
			It follows from the above calculation and and $ \meqref{eq_ci} $ (with $ |G|=N=4 $) that
			\[
			\cos (\pi t) \simeq \frac{1}{4} \left( 1.08 W_1(t) + 2.61 W_2(t) \right).
			\]

			Of course, increasing the value of $ N $ improves the approximations obtained using the Walsh functions. Figure~\ref{fig_walsh_approximation} below shows that graph of function 
			$ \cos(\pi t) $ for $ 0 \leq t \leq 1 $ and the approximations of this function obtained by using Walsh functions of order $ N=2$, $ 4 $, $ 8 $  and $ 16 $.
			
			\begin{figure}[ht]
				
					
					\begin{center}
						\begin{subfigure}[t]{.4\textwidth}
							\centering
							\includegraphics[width=\linewidth]{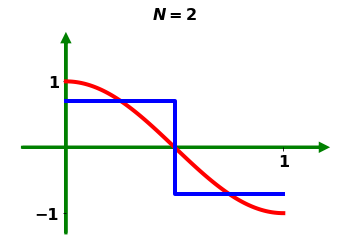}
						\end{subfigure}
						\hspace{1cm}
						\begin{subfigure}[t]{.4\textwidth}
							\centering
							\includegraphics[width=\linewidth]{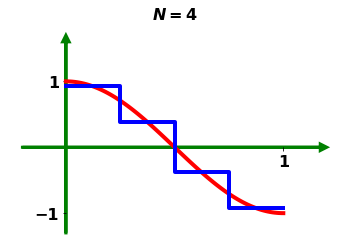}
						\end{subfigure}
						
						\medskip
						
						\begin{subfigure}[t]{.4\textwidth}
							\centering
							\includegraphics[width=\linewidth]{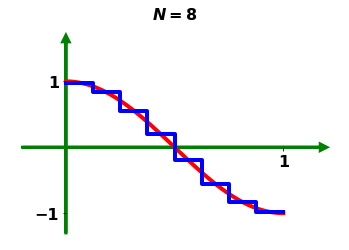}
						\end{subfigure}
						\hspace{1cm}
						\begin{subfigure}[t]{.4\textwidth}
							\centering
							\includegraphics[width=\linewidth]{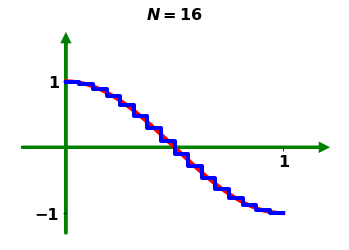}
						\end{subfigure} 
					\end{center} 
				\caption{Approximation of $ f(t) = \cos (\pi t) $, for $ 0 \leq t \leq 1 $, using Walsh functions of order $ N=2$, $ 4 $, $8$ and $ 16 $. The function $ \cos (\pi t) $ is plotted in red and the Walsh function based approximations are plotted in blue.}\label{fig_walsh_approximation}
			\end{figure}
			
		\end{example}

		\subsection{Integration of Walsh functions} \label{subSec_Integraiton_matrix}
		We have seen that Walsh functions can be used to approximate the discretized version of any continuous function defined on $ [0,1] $. One can use Walsh functions to approximate the integrals of piecewise continuous functions on $ [0,1] $. 
		For example, consider the function $ f(t) = \int_{0}^{t} W_0(x) \, dx  = t$, with $ 0 \leq t \leq 1 $. It can be discretized over $ [0,1] $, using $ N=4 $ as 
		\[ 
		f(t) = t \simeq \left[\frac{1}{8} \quad \frac{3}{8} \quad \frac{5}{8} \quad \frac{7}{8} \right]^T,
		\]
		and the Walsh-Hadamard transform computation gives
		\[
		t \simeq \left[\frac{1}{8} \quad \frac{3}{8} \quad \frac{5}{8} \quad \frac{7}{8} \right]^T \longleftrightarrow   \frac{1}{2} \left[2 \quad -\frac{1}{2} \quad -1 \quad 0 \right]^T.
		\] 
		From the above computation and $ \meqref{eq_ci} $ (with $ |G| =N =4$) it follows that 
		\begin{equation}\label{eq_walsh_integration}
			\int_{0}^{t} W_0(x) \, dx  = t = \frac{1}{4} \left( 2 W_0(t) - \frac{1}{2} W_1(t) - W_2(t)  \right) =\frac{1}{2} W_0(t) - \frac{1}{8} W_1(t)  - \frac{1}{4} W_2(t) .
		\end{equation}
		Similarly, it can be checked that 
		\begin{align*}
			\int_{0}^{t} W_1(x) \, dx  \simeq \left[ \frac{1}{8} \quad  \frac{1}{8} \quad \frac{1}{8} \quad \frac{1}{8} \right]^T \longleftrightarrow  \frac{1}{2} \left[ \frac{1}{2} \quad 0 \quad 0 \quad 0\right]^T.
		\end{align*}
		This gives,
		\begin{equation}\label{eq_walsh_integration_one}
			\int_{0}^{t} W_1(x) \, dx  =  \frac{1}{8} W_0(t).
		\end{equation}
		We have 
		\[
		\int_{0}^{t} W_2(x) \, dx  \simeq \left[ \frac{1}{8} \quad \frac{3}{8} \quad \frac{3}{8} \quad \frac{1}{8}\right]^T \longleftrightarrow \frac{1}{2} \left[ 1 \quad 0 \quad 0 \quad -\frac{1}{2}\right]^T,
		\]
		and 
		\[
		\int_{0}^{t} W_3(x) \, dx  \simeq \left[ \frac{1}{8} \quad \frac{1}{8} \quad -\frac{1}{8} \quad -\frac{1}{8}\right]^T \longleftrightarrow \frac{1}{2} \left[0 \quad 0 \quad \frac{1}{2} \quad 0 \right],
		\]
		and similar calculations give,
		\begin{align}
			\int_{0}^{t} W_2(x) \, dx  &= \frac{1}{4} W_0(t)  - \frac{1}{8} W_3(t), \\
			\int_{0}^{t} W_3(x) \, dx &= \frac{1}{8} W_2(t).
		\end{align}
		
		The above calculations give rise to the following \textit{integration matrix}
		\renewcommand{\arraystretch}{1.25}
		\begin{equation}\label{eq_integration_matrix}
			I_4  = 
			\begin{pmatrix}
				\frac{1}{2}	& \frac{1}{8} & \frac{1}{4} &  0\\ 
				-\frac{1}{8}& 0 & 0 & 0  \\
				-\frac{1}{4}& 0 & 0 & \frac{1}{8} \\
				0 &  0 & -\frac{1}{8} & 0
			\end{pmatrix}.
		\end{equation}
		The function to be integrated can be represented as a column vector with the Walsh functions as the basis functions. Then the multiplying this column vector by the integration matrix is equivalent to performing the integration of the function. 
		\begin{example}
			The function $\cos t$ can be discretized and represented as $ \left[ \cos\left(\frac{1}{8}\right) \quad \cos\left(\frac{3}{8}\right) \quad \cos\left(\frac{5}{8}\right) \quad \cos\left(\frac{7}{8}\right)\right] $ on $ [0,1] $ for $ N=4 $. 
			Performing the Walsh-Hadamard transform gives
			\[
			\left[ \cos\left(\frac{1}{8}\right)\quad \cos\left(\frac{3}{8}\right) \quad \cos\left(\frac{5}{8}\right) \quad \cos\left(\frac{7}{8}\right)\right] \longleftrightarrow \frac{1}{2} \left[3.375 \quad 0.232 \quad 0.471 \quad -0.108\right].
			\]
			This means 
			\[
			\cos t  \simeq \frac{1}{4} \left(  3.375 W_0(t) +  0.232 W_1(t) +  0.471 W_2(t) -0.108 W_3(t) \right).  
			\]
			Therefore the column coordinate vector for $ \cos t $ in terms of Walsh basis vectors is given by
			\[
			{\bf{v}} = \left[0.844 \quad 0.058 \quad 0.118 \quad -0.027\right]^T,
			\]
			and 
			\[
			I_4 {\bf{v}} = \begin{pmatrix}
				\frac{1}{2}	& \frac{1}{8} & \frac{1}{4} &  0\\ 
				-\frac{1}{8}& 0 & 0 & 0  \\
				-\frac{1}{4}& 0 & 0 & \frac{1}{8} \\
				0 &  0 & -\frac{1}{8} & 0
			\end{pmatrix} \begin{pmatrix}
				0.844  \\
				0.058  \\
				0.118 \\
				-0.027
			\end{pmatrix} = \begin{pmatrix}
				0.459  \\
				-0.105  \\
				-0.214  \\
				-0.015 
			\end{pmatrix}.  
			\] 
			Therefore, we have
			\[
			\int_{0}^{t} \, \cos x \, dx \simeq 0.459 W_0(t) -0.105 W_1(t) -0.214  W_2(t) -0.015 W_3(t).
			\]
			One can transform this back to obtain the discretization of $ \int_{0}^{t} \, \cos x \, dx  = \sin t $ in the time domain as follows.  \[
			2 (H \otimes H) \begin{pmatrix}
				0.459  \\
				-0.105  \\
				-0.214  \\
				-0.015 
			\end{pmatrix} =  \begin{pmatrix}
				0.125  \\
				0.366\\
				0.585  \\
				0.767 
			\end{pmatrix} = \begin{pmatrix}
				\sin (\frac{1}{8})  \\
				\sin (\frac{3}{8}) \\
				\sin (\frac{5}{8}) \\
				\sin (\frac{7}{8}) 
			\end{pmatrix}.
			\]
		\end{example}
		A similar calculation shows that for  $ N=8 $ the integration matrix $ I_{8} $ is given by
		\[
		I_8 = \begin{pmatrix}  \frac{1}{2} & \frac{1}{16} & \frac{1}{8} & 0 & \frac{1}{4} & 0 & 0 & 0\\  -\frac{1}{16} & 0 & 0 & 0 & 0 & 0 & 0 & 0\\  -\frac{1}{8} & 0 & 0 & \frac{1}{16} & 0 & 0 & 0 & 0\\  0 & 0 & -\frac{1}{16} & 0 & 0 & 0 & 0 & 0\\ -\frac{1}{4} & 0 & 0 & 0 & 0 & \frac{1}{16} & \frac{1}{8} & 0\\  0 & 0 & 0 & 0 & -\frac{1}{16} & 0 & 0 & 0\\  0 & 0 & 0 & 0 & -\frac{1}{8} & 0 & 0 & \frac{1}{16}\\  0 & 0 & 0 & 0 & 0 & 0 & -\frac{1}{16} & 0\\ \end{pmatrix}.
		\]
		The integration matrix $ I_{N} $ for $ N=2^n $ can be be computed similarly. The integration matrix $ I_{N} $ is a very sparse matrix and therefore computation of integration is not very costly even when $ n $ is large. It is easy to see that if $ f(t) $ is discretized as $ {\bf{f}} = [f\left(\frac{1}{2N}\right) \quad f\left(\frac{3}{2N}\right) \quad f\left(\frac{5}{2N}\right) \quad ~\ldots~ \quad f\left(\frac{2N-1}{2N}\right)  ]^T $, then  $ \int_{0}^{t} \, f(x) \, dx $ in the discretized form can be computed as
		\begin{equation}\label{eq_integration_new}
			H^{\otimes n} \, ( I_N \, H^{\otimes n} \, ({\bf{f}} )),
		\end{equation}
		for $  0 \leq t \leq 1. $
		In our proposed approach based on a hybrid classical-quantum implementation of the Walsh-Hadamard transform as outlined in Algorithm~$ \ref{alg_QWHT} $, the above expression can be evaluated as 
		\begin{equation}\label{eq_integration}
			\QWHT( I_N \QWHT({\bf{f}} )).
		\end{equation}
		
		Although we have provided examples in the domain $ [0, \, 1] $, 
		the present approach can easily be rescaled for arbitrary domains $ [t_{\text{initial}}, \,t_{\text{final}} ] $.

		\subsection{Differentiation Matrix}\label{subsec:differentiation_matrix}
		Since differentiation can be viewed as an inverse process of integration, the \textit{differentiation matrix} can be obtain by finding the inverse of the integration matrix.
		For example, for $ N=4 $, the differentiation matrix $ D_4 $ is given by
		\[
		D_4 = I_4^{-1} = \begin{pmatrix}
			0	& -8 & 0 &  0\\ 
			8& 32 & 0 & 16 \\
			0& 0 & 0 & -8 \\
			0 &  -16 & 8 & 0
		\end{pmatrix}.
		\]
		Similarly, the differentiation matrix $ D_{N} $ of order $ N=2^n $ can be be computed as the inverse of $ I_{N} $. The differentiation matrix $ D_{N} $ is also a sparse matrix and therefore computation of differentiation is also not very costly even when $ n $ is large. 
		
		\section{A Hybrid Algorithm for Solving Differential Equations}\label{sec:hybrid_algorithm_de}
		Walsh functions and the Walsh-Hadamard transform can be used to solve nonlinear differential equations. These methods are well-known and were used by several authors for solving differential equations~\cite{gnoffo2014global},~\cite{gnoffo2015unsteady},~\cite{beer1981walsh}. The key idea of our work is to replace the classical computation of Walsh-Hadamard transform with a hybrid classical-quantum computation of Walsh-Hadamard transform (based on the natural ordering). Since the hybrid classical-quantum version of Walsh-Hadamard transform can be carried out more efficiently (in $ (\mathcal{O}(N)) $ operations) than the classical Fast Walsh-Hadamard transform ($ \mathcal{O}(N \log_2(N)) $), it provides a superior algorithm to solve nonlinear differential equations.
		
		Consider the following system of nonlinear ordinary differential equations  
		\begin{align} 
			\frac{d x_i}{dt} &= f_i(x_1, x_2, ~\ldots~ ,x_m, t), \quad \text{for $ i =1,2, ~\ldots~, m$,}\label{eq_ode_general_one} 
		\end{align}
		with the initial conditions
		\begin{equation}\label{eq_ode_initial}
			x_i(0) =  q_i.
		\end{equation}

		\begin{algorithm}[H] \label{alg_main_general}
			\DontPrintSemicolon
			\KwInput{Integers $ N  $  and $ N_{max} $.}
			\KwOutput{Discretized versions of $ x_i(t) $, with $ N $ sub-intervals on $ 0 \leq t \leq 1 $, for $ i=1,2,~\ldots~,m $.}
			\tcc{The algorithm uses the quantum subroutine $ \QWHT $($ \bf{v} $) to compute the quantum Walsh-Hadamard transform of the input vector $ \bf{v} $. }
			$ t =  [\frac{1}{2N}\quad \frac{3}{2N} \quad \frac{5}{2N} \quad ~\ldots~ \quad \frac{2N-1}{N}  ]^{T} $ \tcp*{Initialize $ t $.} 
			\For{$ i \gets 1$ \KwTo $ m$ }{
				$x_i = {x_{i}}_{\text{Initial}} = [q_i \quad q_i \quad q_i ~\ldots~ \quad q_i]^T $ \tcp*{Initialize $ x_i $.}}
			$ I_N =  \text{ The integration matrix of order } N $ \tcp*{Initialize the integration matrix $ I_N $.} 
			\For{$i\gets 1$ \KwTo $N_{max}$ }{
				$ x_i = {x_{i}}_{\text{Initial}} + \QWHT \left(I_N \, \QWHT(f_i(x_1, x_2, ~\ldots~ ,x_m, t ))  \right) $  \tcp*{Compute $ x_i = {x_{i}}_{\text{Initial}} + \int_{0}^{t}\, f_i  \, d\tau $.}        
			}
			\Return{$ x_1 ,  x_2 ,  ~\ldots~, x_m $.}
			\caption{A hybrid classical-quantum algorithm for solving the system of differential equation given in $ \meqref{eq_ode_general_one} $.}
		\end{algorithm}
		
		Our algorithm starts by initializing  $ {x_{i}} $ to a vector ${x_{i}}_{\text{Initial}} = [q_i \quad q_i \quad q_i \quad ~\ldots~ \quad q_i]^T $ of size $ N $ for $ i=1 $ to $ i=m $. Then using 
		$ \meqref{eq_ode_general_one} $ and $ \meqref{eq_ode_initial} $  one can write
		\begin{align*}
			x_i &= {x_{i}}_{\text{Initial}} + \int_{0}^{t}\, f_i(x_1, x_2, ~\ldots~ ,x_m,\tau ) \, d \tau 
		\end{align*}
		The Walsh functions based representation and the integration matrix approach, discussed earlier in Section~\ref{sec:representation_wh_basis}, can be used to solve the above equations iteratively. 
		The iterations can then be continued up to a maximum chosen number of times, $ N_{max} $, where $ N_{max} $ depends on the accuracy of the solution desired. One can select $ N_{max} $ such that the successive iterations give identical results to the required number of decimal places. 
		We note that the statement,  $$ x_i = {x_{i}}_{\text{Initial}} + \QWHT \left(I_N \, \QWHT(f_i(x_1, x_2, ~\ldots~ ,x_m,t )) \right) $$ in Algorithm~$ \ref{alg_main_general} $, computes  $ x_i = {x_{i}}_{\text{Initial}} + \int_{0}^{t}\, f_i(x_1, x_2, ~\ldots~ ,x_m,t ) \, d \tau $ (see \meqref{eq_integration} and the discussion before that). 
		Here, we also note that $ f_i(x_1, x_2, ~\ldots~ ,x_m, t ) $ is computed point-wise in the time-domain.

		\section{Computational Examples}\label{sec:computational_examples}
		
		In this section, we will give computational examples to illustrate Algorithm \ref{alg_main_general} for solving nonlinear differential equations. The proposed algorithms for the solution of nonlinear ordinary differential equations were successfully implemented and tested using the simulated environment of Qiskit (IBM's open source quantum computing platform).  We note the three examples given below illustrate our solution approach for Initial Value Problems (IVPs). The first and the second examples illustrate the proposed approach for solutions of non-stiff and stiff nonlinear ordinary differential equations in 1D. The third example demonstrates the extension to solution of multidimensional system of  nonlinear ordinary differential equations. 
		A brief discussion on extensions to boundary value problems is provided at the end of this section.
		
		\subsection{Example 1 - Riccati differential equation}
		We consider the following Riccati differential equation for $ 0 \leq t \leq 1 $,
		\begin{align} 
			\frac{d x_1}{dt} &= x_1^2 + x_1 + 1,
			\label{eq_ode_Riccati} 
		\end{align}
		with the initial conditions $ x_1(0) = - \frac{1}{2}  $. The analytical solution for the above initial value problem is 
		\[
		x_1 = \frac{1}{2} \left( \sqrt{3} \tan\left(\frac{\sqrt{3}t}{2}  \right) -1 \right).
		\]
		Our algorithm starts by initializing $x_1 ={x_1}_{\text{Initial}} = -\frac{1}{2} [ 1 \quad 1 \quad 1 \quad 1 ]^T $. Then using 
		$ \meqref{eq_ode_Riccati} $  one can write
		\begin{align}\label{eq_example_one}
			x_1(t) &=  {x_1}_{\text{Initial}} + \int_{0}^{t}\, f_1(x_1(\tau)) \, d\tau, 
		\end{align}
		where $ f_1(x_1(\tau)) = x_1(\tau)^2 + x_1(\tau) + 1 $.
		The last expression can now be used iteratively, employing the Walsh functions based representation and the integration matrix approach as discussed earlier in Section \ref{subSec_Integraiton_matrix}.
		The iterations can then be continued until the desired accuracy is achieved. The maximum number of iterations $ N_{max} $ can be chosen so that the successive iterations produce identical results up to the desired number of decimal places.  The value $ N_{max} $ is chosen to be $ 10 $ in Algorithm~$ \ref{alg_Riccati} $.
		The statement 	$ x_1 = {x_1}_{\text{Initial}} + \QWHT \left(I_4 \, \QWHT( x_1^2 + x_1 + 1) \right) $  in Algorithm~$ \ref{alg_Riccati}$ computes 
		$ x_1(t)$ by evaluating the right side of $ \meqref{eq_example_one} $.
		After the first iteration the result obtained is 
		\[
		x_1=  [-0.40625 \quad -0.21875 \quad -0.03125 \quad 0.15625]^T.
		\]
		The result obtained after the $ 10 $-th iteration, and rounded to the five decimal places, is 
		\[
		x_1 = [-0.40512 \quad -0.20567 \quad 0.02743 \quad 0.33735]^T.
		\]
		The correct analytic solution $ x_1(t) $ along with solutions obtained by Walsh functions of order $ N=2$, $ 4 $, $8 $, and $ 16 $ are shown in Figure~$ \ref{fig_walsh_approximation_solution_example_Riccati} $.
		
		\begin{figure}[ht]
			
				
				\begin{center}
					\begin{subfigure}[t]{.4\textwidth}
						\centering
						\includegraphics[width=\linewidth]{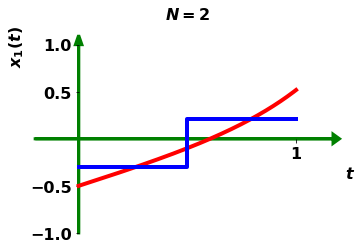}
					\end{subfigure}
					\hfill
					\hspace{1cm}
					\begin{subfigure}[t]{.4\textwidth}
						\centering
						\includegraphics[width=\linewidth]{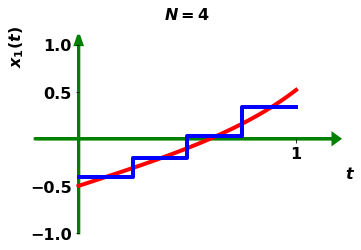}
					\end{subfigure}
					\medskip	
					\begin{subfigure}[t]{.4\textwidth}
						\centering
						\includegraphics[width=\linewidth]{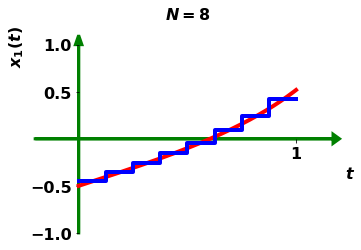}
					\end{subfigure}
					\hfill
					\hspace{1cm}
					\begin{subfigure}[t]{.4\textwidth}
						\centering
						\includegraphics[width=\linewidth]{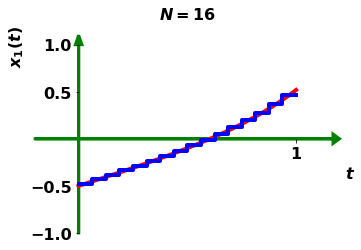}
					\end{subfigure} 
				\end{center} 
			\caption{Analytic solution for $ x_1(t) $ (in red) and the solutions obtained by Walsh functions of order $ N=2$, $ 4 $, $8 $ and $ N=16$ (in blue), based on the proposed hybrid classical-quantum approach. }\label{fig_walsh_approximation_solution_example_Riccati}
		\end{figure}

		\begin{algorithm}[H] \label{alg_Riccati}
			\DontPrintSemicolon
			
			\KwInput{$ N = 4 $  and $ N_{max} =10 $.}
			\KwOutput{Discretized versions of $ x_1(t) $ on $ 0 \leq t \leq 1 $ with $ N =4 $ sub-intervals.}
			\tcc{The algorithm uses the quantum subroutine $ \QWHT $($ \bf{v} $) to compute the quantum Walsh-Hadamard transform of the input vector $ \bf{v} $. }
			$ t =  [\frac{1}{8}\quad \frac{3}{8} \quad \frac{5}{8} \quad  \frac{7}{8}  ]^T $ \tcp*{Initialize $ t $.} 
			$ x_1 = {x_1}_{\text{Initial}} = [-\frac{1}{2} \quad -\frac{1}{2} \quad -\frac{1}{2} \quad -\frac{1}{2}]^T $ \tcp*{Initialize $ x_1 $.}
			$ I_4 = \begin{pmatrix}
				\frac{1}{2}	& \frac{1}{8} & \frac{1}{4} &  0\\ 
				-\frac{1}{8}& 0 & 0 & 0  \\
				-\frac{1}{4}& 0 & 0 & \frac{1}{8} \\
				0 &  0 & -\frac{1}{8} & 0
			\end{pmatrix} $  \tcp*{Initialize the integration matrix $ I_4 $.} 
			\For{$i\gets 1$ \KwTo $N_{max}$ }{
				\tcc{ Compute $ x_1 = {x_1}_{\text{Initial}} + \int_{0}^{t}\, x_1(\tau)^2 + x_1(\tau) + 1 \, d\tau $.}
				$ x_1 = {x_1}_{\text{Initial}} + \QWHT \left(I_4 \, \QWHT( f_1 ) \right) $, where $ f_1(x_1) = x_1^2 + x_1 + 1 $          
			}        
			\Return{$ x_1 $.}
			
			\caption{A hybrid algorithm for solving the differential equation given in $ \meqref{eq_ode_Riccati} $.}
		\end{algorithm}

		\subsection{Example 2 - A stiff nonlinear ODE}\label{subsection_flame}
		
		To demonstrate the applicability of our proposed hybrid classical-quantum approach to solution of stiff, nonlinear ordinary differential equations, we consider a model of flame propagation as our next example.
		If one lights a match, initially the flame grows very quickly before settling down to a stable size. 
		The stable size is reached when the amount of oxygen available through the surface of the flame equals the amount of oxygen being consumed in the combustion process inside the flame ball. 
		This evolution is modeled by the following (stiff) nonlinear ordinary differential equation
		\begin{align} 
			\frac{dx_1}{d\tau} &= x_1^2 (1 - x_1), \label{flame_eq_ode} 
		\end{align}
		with the initial conditions $ x_1(0) = \delta  $ and $ \tau \in [0, 2/\delta]$. Here $ x_1(\tau) $ represents the radius of the flame ball at time $ \tau $. 
		
		It is convenient to introduce a change of the variable $ \tau $ as $ t = \frac{\delta}{2} \tau $ to rescale the time-domain to $ [0,1] $. Then the resulting initial value problem is   
		\begin{align} 
			\frac{d x_1}{dt} &= \frac{2}{\delta} \, x_1^2 (1 - x_1), \label{flame_eq_ode_main} 
		\end{align}
		with the initial conditions $ x_1(0) = \delta  $ and $ t \in [0, 1]$. 
		One can now write
		\begin{align}
			x_1(t) &=  {x_1}_{\text{Initial}} + \int_{0}^{t}\, f_1(x_1(s)) \, ds, \label{flame_eq_ode_two}
		\end{align}
		where $ f_1(x_1(\tau)) = \frac{2}{\delta} \, x_1(\tau)^2 (1 - x_1(\tau)) $ with
		$x_1 ={x_1}_{\text{Initial}} = \delta [ 1 \quad 1 \quad ~\ldots~ \quad ~\ldots~ \quad 1]^{T} $.  
		The approach to solve this problem is quite similar to the previous example and hence we skip the details here. The solution obtained using the Walsh functions of order $ N=2^8 $ and with $ \delta = 0.002 $ is plotted in Figure~$ \ref{Fig_flame} $ along with the exact solution to \meqref{flame_eq_ode_main}. We note that the exact solution to \meqref{flame_eq_ode_main} is 
		\begin{equation}\label{eq_exact_solution}
			x_1(t) = \frac{1}{W \left(a e^{a - \frac{2t}{\delta}}\right)  + 1},
		\end{equation}
		where $ a = 1 /\delta -1 $ and $ W $ is the Lambert $ W $ function (See Sec. \ 7.1, \cite{scheffel2012spectral}). 
		
		As noted in Figure~$ \ref{Fig_flame} $, the numerical solution to \meqref{flame_eq_ode_main}, based on the proposed hybrid classical-quantum approach using the Walsh functions, agrees quite well with the exact solution. 
		It is evident that the proposed numerical approach is able to accurately capture the sharp transition (that occurs at $t = 0.5$ in the radius of the flame.
		As noted in \cite{scheffel2012spectral}, explicit finite difference methods will require extremely small time steps to capture this sharp gradient.

		\begin{figure}[hpt]
			\centering
			\includegraphics[width=0.6\linewidth]{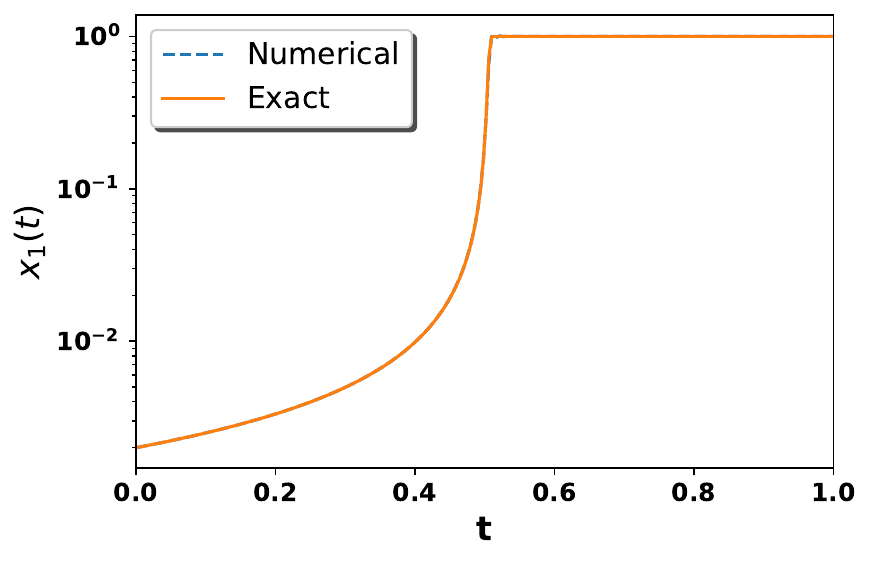}
			\caption{Numerical solution of a stiff nonlinear ODE (representing a model for flame propagation),   \meqref{flame_eq_ode_main}, based on the proposed hybrid classical-quantum approach using the Walsh functions of order $ N=2^8 $ is shown. Numerical solution of \meqref{flame_eq_ode_main} is in very good agreement with the exact solution of this equation (as given in \meqref{eq_exact_solution}).}
			\label{Fig_flame}
		\end{figure}

		\subsection{Example 3 - A system of nonlinear ODEs}
		In this section, we will demonstrate the application of our proposed hybrid classical-quantum approach to the solution of a system of nonlinear ordinary differential equations involving two dependent variables.  
		Consider the following system of differential equations~\cite{beer1981walsh}
		\begin{align} 
			\frac{d x_1}{dt} &= x_2, \label{eq_ode_one} \\ 
			\frac{d x_2}{dt} &= -(3x_1x_2 + x_1^3), \label{eq_ode_two}
		\end{align}
		with the initial conditions $ x_1(0) =0  $ and $ x_2(0) =1 $.

		\begin{algorithm}[H] \label{alg_main}
			\DontPrintSemicolon
			
			\KwInput{$ N = 4 $  and $ N_{max} =20 $.}
			\KwOutput{Discretized versions of $ x_1(t) $ and $ x_2(t) $ on $ 0 \leq t \leq 1 $ with $ N =4 $ sub-intervals.}
					\tcc{The algorithm uses the quantum subroutine $ \QWHT $($ \bf{v} $) to compute the quantum Walsh-Hadamard transform of the input vector $ \bf{v} $. }
			$ t =  [\frac{1}{8}\quad \frac{3}{8} \quad \frac{5}{8} \quad  \frac{7}{8}  ]^T $ \tcp*{Initialize $ t $.} 
			$x_1 = {x_{1}}_{\text{Initial}} = [0 \quad 0 \quad 0 \quad 0]^T $ \tcp*{Initialize $ x_1 $.}
			$ x_2 = {x_{2}}_{\text{Initial}}  = \left[1 \quad 1 \quad 1 \quad 1\right]^T  $ \tcp*{Initialize $ x_2 $.}
			$ I_4 = \begin{pmatrix}
				\frac{1}{2}	& \frac{1}{8} & \frac{1}{4} &  0\\ 
				-\frac{1}{8}& 0 & 0 & 0  \\
				-\frac{1}{4}& 0 & 0 & \frac{1}{8} \\
				0 &  0 & -\frac{1}{8} & 0
			\end{pmatrix} $  \tcp*{Initialize the integration matrix $ I_4 $.} 
			\For{$i\gets 1$ \KwTo $N_{max}$ }{
				\tcc{ Compute $ x_1 = {x_{1}}_{\text{Initial}} + \int_{0}^{t}\, x_2(\tau) \, d\tau $.}
				$ x_1 = {x_{1}}_{\text{Initial}} + \QWHT \left(I_4 \, \QWHT(f_1)  \right) $, where $ f_1(x_1,x_2) = x_2 $     \\        
				\tcc{ Compute $ x_2 = {x_{2}}_{\text{Initial}} - \int_{0}^{t}\,  \left(3x_1(\tau) x_2(\tau) + x_1(\tau)^3\right) \, d\tau $.}
				$ x_2 = {x_{2}}_{\text{Initial}}  + \QWHT \left(I_4 \, \QWHT( f_2 ) \right)$, where $ f_2(x_1,x_2) =  -(3x_1x_2 + x_1^3) $ 
			}
			\Return{$ x_1 $, $ x_2 $}
			
			\caption{A hybrid algorithm for solving the system of differential equation given in $ \meqref{eq_ode_one} $ and $ \meqref{eq_ode_two} $.}
		\end{algorithm}

		Our algorithm begins by initializing $x_1 = {x_{1}}_{\text{Initial}} = [0 \quad 0 \quad 0 \quad 0]^T $ and $x_2 = W_0(t) = \left[1 \quad 1 \quad 1 \quad 1\right]^T $. Then using 
		$ \meqref{eq_ode_one} $ and $ \meqref{eq_ode_two} $  one can write
		\begin{align*}
			x_1 &= {x_{1}}_{\text{Initial}} + \int_{0}^{t}\, f_1(x_1(\tau),x_2(\tau)) \, d \tau, \\
			x_2 &= {x_{2}}_{\text{Initial}} + \int_{0}^{t}\, f_2(x_1(\tau),x_2(\tau))  \, d \tau,
		\end{align*}
		where $ f_1(x_1(\tau),x_2(\tau)) = x_2(\tau) $ and $ f_2(x_1(\tau),x_2(\tau)) = - (3x_1(\tau) x_2(\tau) + x_1(\tau)^3) $.
		The last two equations can be solved using the Walsh functions based representation and the integration matrix approach, as discussed earlier in Section \ref{subSec_Integraiton_matrix}.
		The iterations can then be continued up to a maximum chosen number of times, $ N_{max} $, which depends on the accuracy of the solution desired. For example, $ N_{max} $ can be chosen such that the successive iterations give identical results to the required number of decimal places. 
		At the end of the first iteration we obtain
		\begin{align*}
			x_1 &\simeq \left[0.125 \quad 0.375 \quad 0.625 \quad 0.875\right]^T, \\
			x_2 &\simeq \left[1 \quad 1 \quad 1 \quad 1\right]^T.
		\end{align*}
		We note that in Algorithm~$ \ref{alg_main} $ the integration  $ x_1 = {x_{1}}_{\text{Initial}} + \int_{0}^{t}\, f_1(x_1(\tau),x_2(\tau)) \, d \tau $ is performed by the statement  $$x_1 = {x_{1}}_{\text{Initial}} + \QWHT \left(I_4 \, \QWHT(f_1)  \right). $$   
		Similarly, the statement 	$$ x_2 = {x_{2}}_{\text{Initial}}  + \QWHT \left(I_4 \, \QWHT( f_2 ) \right)$$ in Algorithm~$ \ref{alg_main} $ is used to compute $ x_2 = {x_{2}}_{\text{Initial}} - \int_{0}^{t}\,  f_2(x_1(\tau),x_2(\tau)) \, d\tau $. Here, we also note that computation $ f_2(x_1,x_2) = -(3x_1x_2 + x_1^3) $ is carried out point-wise in the time-domain. 
		For example, when $$ x_1 = \left[0.125 \quad  0.375 \quad  0.625 \quad  0.875\right]^T  \text{ and }  x_2 = \left[1 \quad 1 \quad 1 \quad 1\right]^T, $$ we have $$ x_1^3 = \left[0.001953125 \quad  0.052734375 \quad  0.244140625 \quad  0.669921875\right]^T,$$  $$3x_1 x_2 = \left[ 0.375 \quad  1.125 \quad  1.875 \quad  2.625  \right]^T, $$ and
		$$ -(3x_1x_2 + x_1^3) = \left[-0.376953125\quad  -1.177734375 \quad  -2.119140625 \quad  -3.294921875\right]^T.$$ 
		After $ 8 $-th iteration the result is 
		\begin{align*}
			x_1&=  [0.11960814 \quad  0.33997528 \quad  0.51224524 \quad  0.62590886]^T, \\
			x_2&=  [0.95686836 \quad  0.80607053 \quad  0.57178512 \quad 0.33552362]^T,
		\end{align*}
		which is the same as obtained in~\cite{beer1981walsh}.
		We note that the result after the $ 20 $-th iteration is
		\begin{align*}
			x_1&=  [0.11960845 \quad  0.33997421 \quad 0.51220193 \quad  0.62564211]^T, \\
			x_2&=  [0.95686757 \quad  0.80605858 \quad  0.57176313 \quad  0.33575831]^T.
		\end{align*}
		It is known that in this case the analytic solutions are $ \tilde{x}_1(t) = \frac{2t}{t^2 +2} $ and $ \tilde{x}_2(t) = \frac{4-2t^2}{(t^2+2)^2} $. The analytic solutions $ x_1(t) $ and $ x_2(t) $ along with solutions obtained by Walsh functions of order $ N=4 $ and $ N=16 $ are shown in Figure~$ \ref{fig_walsh_approximation_solution_example_one} $.

		\begin{figure}[ht]
			
				
				\begin{center}
					\begin{subfigure}[t]{.4\textwidth}
						\centering
						\includegraphics[width=\linewidth]{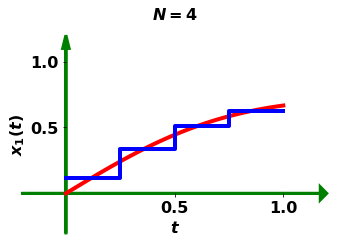}
					\end{subfigure}
					\hspace{1cm}
					\begin{subfigure}[t]{.4\textwidth}
						\centering
						\includegraphics[width=\linewidth]{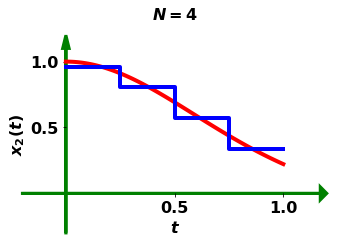}
					\end{subfigure}
					
					\medskip
					
					\begin{subfigure}[t]{.4\textwidth}
						\centering
						\includegraphics[width=\linewidth]{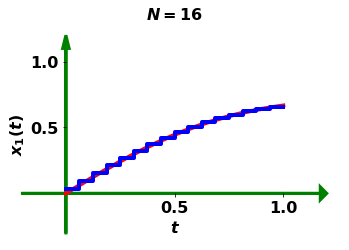}
					\end{subfigure}
					\hspace{1cm}
					\begin{subfigure}[t]{.4\textwidth}
						\centering
						\includegraphics[width=\linewidth]{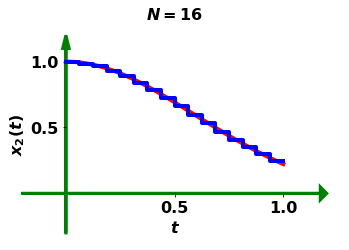}
					\end{subfigure} 
				\end{center} 
			\caption{Analytic solution in red for $ x_1(t) $ and the solutions obtained by Walsh functions of order $ N=4 $ and $ N=16 $ (in blue), based on the proposed hybrid classical-quantum approach are shown in the left column. The corresponding results for $ x_2(t) $ are shown in the right column.}\label{fig_walsh_approximation_solution_example_one}
		\end{figure}

		Although we demonstrated the application of our proposed hybrid classical-quantum approach for solutions of nonlinear IVPs, the approach could easily be applied to solution of nonlinear Boundary Value
		Problems (BVPs) as well. A well known numerical technique for solution of boundary value problems is the shooting method (covered in many standard textbooks relevant to numerical methods, e.g., \cite{bulirsch2002introduction}, \cite{suli2003introduction}, \cite{burden2015numerical},  \cite{isaacson2012analysis} and \cite{ascher1995numerical}) where a BVP is converted to an IVP and the solution is obtained through an iterative approach. Hence our proposed hybrid classical-quantum approach could be used to solve nonlinear BVPs via the shooting method and the benefits of our proposed approach also extend to the nonlinear BVPs. 

		\subsubsection{Error Analysis}
		
		As we noted earlier the correct analytic solutions are $ \tilde{x}_1(t) = \frac{2t}{t^2 +2} $ and $ \tilde{x}_2(t) = \frac{4-2t^2}{(t^2+2)^2} $ for the problem described in Example 2.
		We note that the accuracy of the solution using the hybrid classical-quantum approach described above depends on the chosen values of $ N $ and $ N_{max} $, where $ N $ and $ N_{max} $ denote the number of basis functions and number of iterations, respectively.	These denote the number of basis functions and number of iterations, respectively
		
		Assume that $ x_1 = [x_{1,0} \quad x_{1,1}  \quad ~\ldots~ \quad x_{1,N-1} ]^T $ and  $ x_2 = [x_{2,0} \quad x_{2,1}  \quad ~\ldots~ \quad x_{2,N-1} ]^T $ are the discrete solutions obtained at the collocation points after $ N_{max} $ iterations. The corresponding functions $ x_1(t) $ and $ x_2(t) $ are obtained as follows,
		
		\begin{equation}\label{eq_def_x1}
			x_k(t) := \begin{cases}
				& x_{k,i} \qquad \frac{i}{N} \leq t < \frac{i+1}{N}, \,  0 \leq i \leq N-1,\\
				&x_{k,N-1} \qquad $ t = 1, $
			\end{cases} 
		\end{equation}  
		for $ k=1 $ and $k= 2 $ respectively. To quantify the accuracy of the numerical solution, we consider an $ L_2 $ error $ \epsilon $ as defined by 
		\begin{equation}\label{eq_error_1}
			\epsilon =  \sqrt{ \int_{0}^{1} \, \left[\sum_{k=1}^{2} \, (\tilde{x}_k(t) - x_k(t))^2  \right]  \, dt }. 
		\end{equation}
		Variation of this error $\epsilon$ with $N$ (for $N = 2^p$, and $p$ = 2, \ldots, 13) is shown in Figure $  \ref{fig:error}$. We observe that the error $\epsilon $ decreases with increasing $N$. A more detailed error analysis (including topics relevant to function approximations using Walsh-Hadamard basis functions, spectral radius and domains of convergence associated with fixed point maps) is outside of the scope of this paper and could be pursued as part of future work.
		\begin{center}
			\begin{figure}
				\centering
				\includegraphics[width=0.6\linewidth]{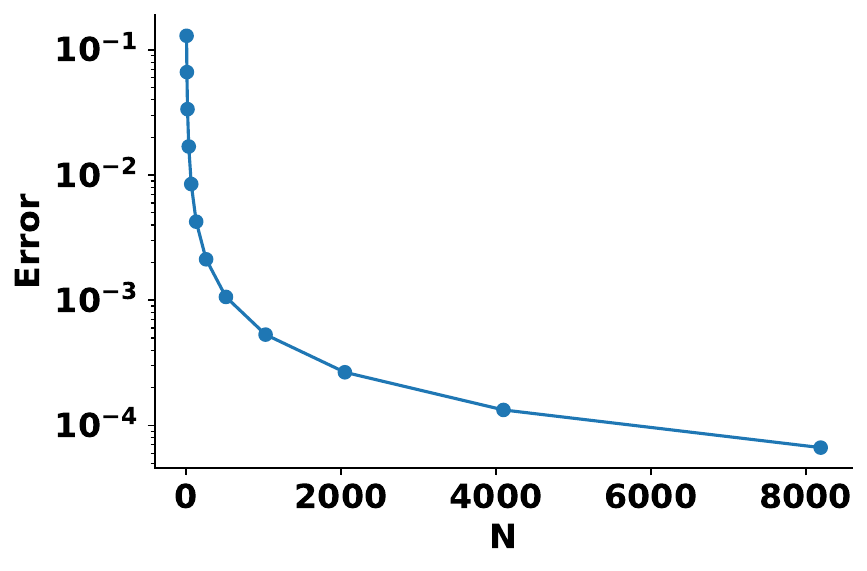}
				\caption{Convergence of $L_2$  error $ \epsilon $  in solution of the system of differential equations given in $ \meqref{eq_ode_one} $ and $ \meqref{eq_ode_two} $ using Algorithm~$ \ref{alg_main} $.}
				\label{fig:error}
			\end{figure}
		\end{center}

		\section{Conclusion}\label{sec:conclusion}
		In this work, we proposed a hybrid classical-quantum approach for solutions of nonlinear ordinary differential equations using Walsh-Hadamard basis functions. 
		The Walsh-Hadamard transform is a key step in many known classical methods for the solution of nonlinear ordinary differential equations using the Walsh-Hadamard basis functions. While Hadamard gates are commonly used in many quantum algorithms and can naturally compute the Walsh-Hadamard transform (under certain conditions), there are challenges involved in extracting useful classical information. These challenges associated with obtaining Walsh-Hadamard transforms of arbitrary vectors are overcome via our proposed hybrid classical-quantum approach (see 
		Algorithm~$ \ref{alg_QWHT} $). This hybrid classical-quantum approach for Walsh-Hadamard transform involves shifting, scaling and measurement operations (along with state preparation and use of quantum Hadamard gates). This hybrid classical-quantum approach for obtaining the Walsh-Hadamard transform for an arbitrary input vector was implemented and successfully tested on the simulation environment on Qiskit (IBM's open source quantum computing platform). The advantage of the proposed hybrid classical-quantum approach for Walsh-Hadamard transform is significantly lower computational complexity ($ (\mathcal{O}(N)) $ operations) in comparison to the classical Fast Walsh-Hadamard transform ($ \mathcal{O}(N \log_2(N)) $ operations). This speedup could be utilized to achieve a superior hybrid classical-quantum approach for the solution of nonlinear ordinary differential equations. The proposed approach was demonstrated using three computational examples relevant to nonlinear differential equations for cases involving one dependent variable (including stiff and non-stiff cases) and also two dependent variables. The results based on the proposed hybrid classical-quantum approach for the solution of nonlinear differential equations were found to be encouraging and they matched the results obtained from corresponding classical approaches (and exact solutions). 
		
		The relation between Walsh-Hadamard functions and the character theory of finite groups is also explored to provide an alternate and perhaps a more conceptually rigorous approach for constructing Walsh-Hadamard functions. The character theory also provides a natural ordering of Walsh-Hadamard basis functions, compatible with unitary transformations associated with quantum Hadamard gates. 
		In contrast to previous works relevant to classical solutions of ordinary differential equations based on Walsh functions, which mostly focused on sequency ordering, this work presents new formulations and results based on the natural ordering of Walsh-Hadamard basis functions.
		
		Future work involves extensions of the proposed hybrid classical-quantum approach to the solution of partial differential equations and quantum machine learning algorithms. Many classical applications (including signal processing and image compression) where classical Walsh-Hadamard transforms are currently used can also benefit from the hybrid classical-quantum approach for computing Walsh-Hadamard transform (a key step in the proposed ODE solution approach).

%

\begin{thebibliography}{10}
			
			\bibitem{butcher2016numerical}
			John~Charles Butcher.
			\newblock {\em Numerical methods for ordinary differential equations}.
			\newblock John Wiley \& Sons, 2016.
			
			\bibitem{suli2010numerical}
			Endre S{\"u}li.
			\newblock Numerical solution of ordinary differential equations.
			\newblock {\em Mathematical Institute, University of Oxford}, 2010.
			
			\bibitem{trefethen2000spectral}
			Lloyd~N Trefethen.
			\newblock {\em Spectral methods in MATLAB}.
			\newblock SIAM, 2000.
			
			\bibitem{boyd2001chebyshev}
			John~P Boyd.
			\newblock {\em Chebyshev and Fourier spectral methods}.
			\newblock Courier Corporation, 2001.
			
			\bibitem{walsh1923closed}
			Joseph~L Walsh.
			\newblock A closed set of normal orthogonal functions.
			\newblock {\em American Journal of Mathematics}, 45(1):5--24, 1923.
			
			\bibitem{beauchamp1975walsh}
			Kenneth~George Beauchamp.
			\newblock Walsh functions and their applications.
			\newblock 1975.
			
			\bibitem{zarowski1985spectral}
			C~Zarowski and Maurice Yunik.
			\newblock Spectral filtering using the fast walsh transform.
			\newblock {\em IEEE transactions on acoustics, speech, and signal processing},
			33(5):1246--1252, 1985.
			
			\bibitem{kuklinski1983fast}
			WS~Kuklinski.
			\newblock Fast walsh transform data-compression algorithm: Ecg applications.
			\newblock {\em Medical and Biological Engineering and Computing},
			21(4):465--472, 1983.
			
			\bibitem{lu2016walsh}
			Yi~Lu and Yvo Desmedt.
			\newblock Walsh transforms and cryptographic applications in bias computing.
			\newblock {\em Cryptography and Communications}, 8(3):435--453, 2016.
			
			\bibitem{beer1981walsh}
			Tom Beer.
			\newblock Walsh transforms.
			\newblock {\em American Journal of Physics}, 49(5):466--472, 1981.
			
			\bibitem{ahner1988walsh}
			Henry~F Ahner.
			\newblock Walsh functions and the solution of nonlinear differential equations.
			\newblock {\em American Journal of Physics}, 56(7):628--633, 1988.
			
			\bibitem{chen1975walsh}
			CF~Chen and CH~Hsiao.
			\newblock A walsh series direct method for solving variational problems.
			\newblock {\em Journal of the Franklin Institute}, 300(4):265--280, 1975.
			
			\bibitem{gnoffo2014global}
			Peter~A Gnoffo.
			\newblock Global series solutions of nonlinear differential equations with
			shocks using walsh functions.
			\newblock {\em Journal of Computational physics}, 258:650--688, 2014.
			
			\bibitem{gnoffo2015unsteady}
			Peter~A Gnoffo.
			\newblock Unsteady solutions of non-linear differential equations using walsh
			function series.
			\newblock In {\em 22nd AIAA Computational Fluid Dynamics Conference}, page
			2756, 2015.
			
			\bibitem{gnoffo2017solutions}
			Peter~A Gnoffo.
			\newblock Solutions of nonlinear differential equations with feature detection
			using fast walsh transforms.
			\newblock {\em Journal of Computational Physics}, 338:620--649, 2017.
			
			\bibitem{nielsen2002quantum}
			Michael~A. Nielsen and Isaac Chuang.
			\newblock {\em Quantum {C}omputation and {Q}uantum {I}nformation}.
			\newblock Cambridge University Press, 2000.
			
			\bibitem{kunz1979equivalence}
			Henry~O. Kunz.
			\newblock On the equivalence between one-dimensional discrete walsh-hadamard
			and multidimensional discrete fourier transforms.
			\newblock {\em IEEE Transactions on Computers}, 28(03):267--268, 1979.
			
			\bibitem{geadah1977natural}
			Youssef~A. Geadah and MJG Corinthios.
			\newblock Natural, dyadic, and sequency order algorithms and processors for the
			walsh-hadamard transform.
			\newblock {\em IEEE Transactions on Computers}, 26(05):435--442, 1977.
			
			\bibitem{childs2020quantum}
			Andrew~M Childs and Jin-Peng Liu.
			\newblock Quantum spectral methods for differential equations.
			\newblock {\em Communications in Mathematical Physics}, 375(2):1427--1457,
			2020.
			
			\bibitem{berry2017quantum}
			Dominic~W Berry, Andrew~M Childs, Aaron Ostrander, and Guoming Wang.
			\newblock Quantum algorithm for linear differential equations with
			exponentially improved dependence on precision.
			\newblock {\em Communications in Mathematical Physics}, 356(3):1057--1081,
			2017.
			
			\bibitem{berry2014high}
			Dominic~W Berry.
			\newblock High-order quantum algorithm for solving linear differential
			equations.
			\newblock {\em Journal of Physics A: Mathematical and Theoretical},
			47(10):105301, 2014.
			
			\bibitem{leyton2008quantum}
			Sarah~K Leyton and Tobias~J Osborne.
			\newblock A quantum algorithm to solve nonlinear differential equations.
			\newblock {\em arXiv preprint arXiv:0812.4423}, 2008.
			
			\bibitem{lloyd2020quantum}
			Seth Lloyd, Giacomo De~Palma, Can Gokler, Bobak Kiani, Zi-Wen Liu, Milad
			Marvian, Felix Tennie, and Tim Palmer.
			\newblock Quantum algorithm for nonlinear differential equations.
			\newblock {\em arXiv preprint arXiv:2011.06571}, 2020.
			
			\bibitem{liu2021efficient}
			Jin-Peng Liu, Herman~{\O}ie Kolden, Hari~K Krovi, Nuno~F Loureiro, Konstantina
			Trivisa, and Andrew~M Childs.
			\newblock Efficient quantum algorithm for dissipative nonlinear differential
			equations.
			\newblock {\em Proceedings of the National Academy of Sciences}, 118(35), 2021.
			
			\bibitem{ascher1995numerical}
			Uri~M Ascher, Robert~MM Mattheij, and Robert~D Russell.
			\newblock {\em Numerical solution of boundary value problems for ordinary
				differential equations}.
			\newblock SIAM, 1995.
			
			\bibitem{bulirsch2002introduction}
			Roland Bulirsch, Josef Stoer, and J~Stoer.
			\newblock {\em Introduction to numerical analysis}, volume~3.
			\newblock Springer, 2002.
			
			\bibitem{suli2003introduction}
			Endre S{\"u}li and David~F Mayers.
			\newblock {\em An introduction to numerical analysis}.
			\newblock Cambridge university press, 2003.
			
			\bibitem{burden2015numerical}
			Richard~L Burden, J~Douglas Faires, and Annette~M Burden.
			\newblock {\em Numerical analysis}.
			\newblock Cengage learning, 2015.
			
			\bibitem{isaacson2012analysis}
			Eugene Isaacson and Herbert~Bishop Keller.
			\newblock {\em Analysis of numerical methods}.
			\newblock Courier Corporation, 2012.
			
			\bibitem{serre1977linear}
			Jean-Pierre Serre.
			\newblock {\em Linear representations of finite groups}, volume~42.
			\newblock Springer, 1977.
			
			\bibitem{fulton2013representation}
			William Fulton and Joe Harris.
			\newblock {\em Representation theory: a first course}, volume 129.
			\newblock Springer Science \& Business Media, 2013.
			
			\bibitem{steinberg2012representation}
			Benjamin Steinberg.
			\newblock {\em Representation theory of finite groups: an introductory
				approach}.
			\newblock Springer, 2012.
			
			\bibitem{terras1999fourier}
			Audrey Terras.
			\newblock {\em Fourier analysis on finite groups and applications}.
			\newblock Number~43. Cambridge University Press, 1999.
			
			\bibitem{harvey2021integer}
			David Harvey and Joris Van Der~Hoeven.
			\newblock Integer multiplication in time o (n log n).
			\newblock {\em Annals of Mathematics}, 193(2):563--617, 2021.
			
			\bibitem{scheffel2012spectral}
			Jan Scheffel.
			\newblock A spectral method in time for initial-value problems.
			\newblock {\em American Journal of Computational Mathematics}, 2(3):173--193,
			2012.
			
		\end{thebibliography}

	\end{document}